\newtheorem{Thm}{Theorem}
\newtheorem{Lem}{Lemma}
\newtheorem{Claim}{Claim}
\newtheorem{Def}{Definition}
\newtheorem{Fact}{Fact}
\newenvironment{proof}{\noindent {\textbf{Proof }}}{\begin{flushright}$\Box$\end{flushright} 
\medskip}
\newcommand{\defeq}{\stackrel{\mathsf{def}}{=}}
\newcommand\mbR{\mbox{$\mathbb{R}$}}
\newcommand\mcA{\mathcal{A}}
\newcommand\mcO{\mathcal{O}}
\newcommand{\cc}{{\mathbb{C}^2}}
\newcommand{\h}{{\mathcal{H}}}
\newcommand{\bit}{{\mathcal{B}}}
\newcommand{\hh}{{\mathcal{H}^{\otimes 2}}}
\newcommand{\hd}{{\mathcal{H}^{\otimes d}}}
\newcommand{\hatt}{{\mathcal{H}^{\otimes t}}}
\newcommand{\td}{{{T}^{\otimes d}}}
\newcommand{\watt}{{{W}^{\otimes t}}}
\newcommand{\watd}{{{W}^{\otimes d}}}
\newcommand{\wattx}{{{W}_x}^{\otimes t}}
\mathchardef\mhyphen="2D
\newcommand{\suppress}[1]{}
\newcommand\COMMENT[1]{}
\newcommand{\prob}{\mathop{\mathsf{Pr}}}
\newcommand\ket[1]{| #1 \rangle}
\newcommand\bra[1]{\langle #1 |}
\newcommand\ketbra[1]{| #1 \rangle \langle #1 |}
\newcommand\norm[1]{\| #1 \|}
\newcommand\abs[1]{| #1 |}
\newcommand\p{\mathsf{P}}
\newcommand\fp{\mathsf{FewP}}
\newcommand\rp{\mathsf{RP}}
\newcommand\up{\mathsf{UP}}
\newcommand\pup{\mathsf{PromiseUP}}
\newcommand\bpp{\mathsf{BPP}}
\newcommand\pp{\mathsf{PP}}
\newcommand\bqp{\mathsf{BQP}}
\newcommand\bqnp{\mathsf{BQNP}}
\newcommand\np{\mathsf{NP}}
\newcommand\uma{\mathsf{UMA}}
\newcommand\ma{\mathsf{MA}}
\newcommand\qma{\mathsf{QMA}}
\newcommand\qcma{\mathsf{QCMA}}
\newcommand\uqma{\mathsf{UQMA}}
\newcommand\uqcma{\mathsf{UQCMA}}
\newcommand\fqma{\mathsf{FewQMA}}
\newcommand\afqma{\mathsf{Alternative\mhyphen FewQMA}}
\newcommand\vfqma{\mathsf{Vector\mhyphen FewQMA}}
\newcommand\alttest{\mathsf{Alternating \, Test}}
\newcommand\wittest{\mathsf{Witness \, Test}}
\newcommand\swaptest{\mathsf{Swap \, Test}}
\newcommand\uqmacp{\mathsf{UQMA \mhyphen CPP}}
\newcommand\sat{\mathsf{SAT}}
\newcommand\locham{\mathsf{Local \, Hamiltonian}}
\newcommand\gnm{\mathsf{Group \, non\mhyphen Membership}}
\newcommand\alt{\mathsf{Alt}}
\newcommand\sym{\mathsf{Sym}}
\newcommand\spn{\mathsf{span}}
\newcommand\sgn{\mbox{sgn}}
\newcommand\accept{\mathsf{Accept}}
\newcommand\reject{\mathsf{Reject}}
\begin{document}
\title{On the power of a unique quantum witness}

\COMMENT{
\author{%
  Rahul Jain$^{1}$ 
  \and 
  Iordanis Kerenidis$^{2}$ 
  \and 
  Greg Kuperberg$^{3}$
  \and
  Miklos Santha$^{2}$
  \and
  Or Sattath$^{4}$
  \and
  Shengyu Zhang$^{5}$
 }
\address{%
  $^{1}$Centre for Quantum Technologies and Department of Computer Science, National University of Singapore
  \and
  $^{2}$CNRS - LRI, Universit\'e Paris-Sud, Orsay, France and Centre for Quantum Technologies, National University of Singapore
  \and
  $^{3}$Mathematics Department, University of California, Davis, USA
  \and
  $^{4}$School of Computer Science and Engineering, The Hebrew University, Jerusalem, Israel
  \and
  $^{5}$Department of Computer Science and Engineering, The Chinese University of Hong Kong
}
\email{%
	rahul@comp.nus.edu.sg
  \and
	jkeren@liafa.jussieu.fr
  \and
  greg@math.ucdavis.edu
  \and
  santha@liafa.jussieu.fr 
  \and
	sattath@cs.huji.ac.il
  \and
  syzhang@cse.cuhk.edu.hk
}
}
\author{ 
Rahul Jain\thanks{Centre for Quantum Technologies and Department of Computer Science, National University of Singapore. Email: {\tt rahul@comp.nus.edu.sg}} \quad 
Iordanis Kerenidis\thanks{CNRS - LIAFA,
Universit\'e Paris Diderot, Paris, France. Email: {\tt jkeren@liafa.jussieu.fr}} \quad 
Greg Kuperberg\thanks{Mathematics Department, University of California, Davis, USA. Email: {\tt   greg@math.ucdavis.edu}} \quad
Miklos Santha\thanks{CNRS - LIAFA,
Universit\'e Paris Diderot, Paris, France and Centre for Quantum Technologies, National University of Singapore. Email: {\tt santha@liafa.jussieu.fr}} \quad \\
Or Sattath\thanks{School of Computer Science and Engineering, The Hebrew University, Jerusalem, Israel. Email: {\tt sattath@cs.huji.ac.il}} \quad
Shengyu Zhang\thanks{Department of Computer Science and Engineering, The Chinese University of Hong Kong. Email: {\tt syzhang@cse.cuhk.edu.hk}}}

\maketitle

\begin{abstract}
In a celebrated paper, Valiant and Vazirani~\cite{VV85} raised the question of whether the difficulty of
$\np$-complete problems was due to the wide variation of the number of witnesses of their instances. They gave a strong negative answer by showing that distinguishing between instances having zero or one witnesses is as hard as recognizing $\np$, under randomized reductions.

We consider the same question in the quantum setting and investigate the possibility of reducing quantum witnesses in the context of the complexity class $\qma$, the quantum analogue of $\np$. The natural way to quantify the number of quantum witnesses is the dimension of the witness subspace  $W$ in some appropriate Hilbert space $\h$.
We present an efficient deterministic procedure
that reduces any problem where the dimension $d$ of $W$ is bounded by a polynomial to a problem with a unique quantum witness. The main idea of our reduction is to consider the Alternating subspace of the tensor power $\hd$. Indeed, the intersection of this subspace with $\watd$ is one-dimensional, and therefore can play the role of the unique quantum witness.
\end{abstract}

%\keywords{Valiant-Vazirani Theorem; unique witness; quantum; QMA}

\section{Introduction}
One of the most fundamental ideas of modern complexity theory is that, the study of decision making
procedures involving a single party  should be extended to the study of more complex  procedures 
where several parties interact.
The notions of {\em verification} and {\em witness} are at the heart of those complexity classes
whose definition inherently involves interaction.
The complexity classes $\p$ is the set of languages decidable by a polynomial-time deterministic algorithm. Similarly, 
$\bpp$ is the set of promise problems decidable
by a polynomial-time  bounded-error randomized algorithm. 
We can think of such an algorithm
as a verifier acting alone. The simplest interactive extensions of $\p$ and $\bpp$ are
their non-deterministic analogues, respectively $\np$ and $\ma$~\cite{GMR89, Bab85}.
These classes involve also an all powerful
prover that sends a single message which is used by the verifier's  decision making procedure
together with the input. We require that on positive instances there is some message (called 
in that case a witness)
that makes the verifier accept, whereas on negative instances the verifier rejects independently of the message
sent by the prover. In the case of $\ma$ we can fix the permitted error of the verifier, 
rather arbitrarily to any constant, say $1/3$.

Quantum complexity classes are often defined by analogy to their classical counterparts. 
Since quantum computation is inherently probabilistic, the quantum analog 
of $\ma$ is considered to be the right definition of non-deterministic quantum polynomial-time. 
The quantum extension is twofold: 
the verifier has the power to decide promise problems in $\bqp$, quantum polynomial-time, 
and the messages he receives from the
prover are also quantum. Thus, $\qma$ is the set of promise problems such that on positive instances there exists a quantum witness accepted with probability at least $2/3$ by the polynomial-time quantum verifier and on negative instances the verifier accepts every quantum state with probability at most $1/3$. 
While the idea that a quantum state might play the role of a witness goes back to Knill~\cite{Kni96},
the class was formally defined by Kitaev~\cite{KSV02} under the name of $\bqnp$. 
The currently used name $\qma$ was given to the class by Watrous~\cite{Wat00}.
Kitaev has established several
error probability reduction properties of $\qma$, and proved that the $\locham$, the quantum analog
of $\sat$ was complete for it. 
Watrous has shown that $\gnm$ was a problem in $\qma$ and based on this result he has constructed an oracle under which $\ma$ is strictly included in $\qma$. 
Since then, various problems have been proven to be complete for $\qma$~\cite{JWT03, KKR06, Liu06, Kay07, LCV07}. 
A potentially weaker quantum extension of $\ma$, namely $\qcma$, was defined by Aharonov and Naveh~\cite{AN02}: in the case of $\qcma$, the verifier is still a quantum polynomial-time algorithm, but the message of the prover can only be classical.

The number of witnesses for positive instances of problems in $\np$ can be exponentially high.
Also, known $\np$-complete problems have different instances with widely varying numbers of
solutions. In a celebrated paper, Valiant and Vazirani~\cite{VV85} have raised the question of whether the difficulty of
the class $\np$ was due to this wide variation. They gave a strong negative answer to this question in the 
following sense. Let $\up$ be the set of problems in $\np$ where in addition on positive instances there exists a unique witness. We denote by $\pup$ the extention of $\up$ from languages to promise problems.  
The theorem of Valiant and Vazirani states that any problem in $\np$
can be reduced in randomized polynomial-time to a promise problem in $\pup$, or in set theoretical terms,
$\np \subseteq \rp^\pup$, where $\rp$ is the subclass of problems in $\bpp$ where the computation does not
err on negative instances. The complexity class $\up$ has also its importance because of its connection to
one-way functions: worst case one-way functions exist if and only if $\up \neq \p$~\cite{Ko85, GS88}.

In a recent paper Aharonov, Ben-Or, Brand\~{a}o and Sattah~\cite{ABBS08} have asked a similar question for $\ma,~\qcma$ and $\qma$. 
The restriction of the classical-witness classes $\ma$ and $\qcma$ to their unique variants $\uma$ and $\uqcma$ is rather
natural: no change for negative instances, but on positive instances there has to be exactly one witness that makes the verifier accept with probability
at least $2/3$, while all other messages make him accept with probability at most $1/3$. 
The definition of $\uqma$, the unique variant of $\qma$ is the following: there is no change for negative 
instances with respect to $\qma$, but on positive instances there has to be a quantum witness state $\ket{\psi}$ which is accepted 
by the verifier with probability at least $2/3$, whereas all states orthogonal to $\ket{\psi}$ are accepted
with probability at most $1/3$. Aharonov et al. extended the Valiant-Vazirani proof for the classical witness
classes by showing that $\ma \subseteq \rp^\uma$ and $\qcma \subseteq \rp^\uqcma$.
On the other hand, they left the existence of a similar result for $\qma$ as an open problem.

Why is it so difficult to reduce the witnesses to a single witness in the quantum case? The basic idea of Valiant and Vazirani is to use pairwise independent universal hash functions, having polynomial size descriptions, that eliminate independently each witness with some constant probability. The 
size of the original witness set can be guessed approximately by a polynomial-time probabilistic procedure, 
and in case of a correct guess the hashing keeps alive exactly one witness with again some constant probability.
The same idea basically works for $\ma$ and $\qcma$ as long as
one additional difficulty is overcome:
on positive instances there can be exponentially more ``pseudo-witnesses", accepted with probability between $1/3$ and $2/3$, 
than witnesses which are accepted with probability at least $2/3$. In this case, the Valiant--Vazirani proof technique will eliminate with high probability all witnesses before the elimination of the pseudo-witnesses.
The solution of Aharonov et al. for  this problem is to divide the interval $(1/3, 2/3)$ into polynomially many
smaller intervals and to show that there exists at least one interval such that there are approximately as many witnesses accepted with probability within this interval as above it. 

In the quantum case, the set of quantum witnesses can be infinite. 
For a promise problem in QMA, we can suppose without loss of generality that on positive instances there exists a subspace $W$ such that all unit vectors in $W$ are accepted.
%The strong amplification theorem implies that, for a language in $ \qma$, we can suppose without loss of generality that on positive instances, the set of witnesses forms a subspace $W$. 
The dimension of $W$ could be large and we wish to reduce it to one. 
Aharonov et. al~\cite{ABBS08} considered
%stated as an open problem 
the special case where the dimension of $W$ is two. Although classically two witnesses are trivially reducible to the unique witness case, they have shown that the natural generalization of the Valiant--Vazirani construction cannot solve even the two-dimensional quantum witness case.  

Indeed, the natural generalization of the Valiant--Vazirani construction to this situation is to use 
random projections and hope that some one-dimensional subspace of $W$ will be accepted with substantially higher probability than its orthogonal.
A first difficulty is to implement such projections efficiently. 
But more importantly, a random projection would not create a polynomial gap in the acceptance probabilities for the 
pure states of $W$: in fact all states in $W$ which were accepted with exponentially close probabilities, will still be accepted after the random projection  with exponentially close probabilities.  

Here we describe a fundamentally different proof technique to tackle this problem, which is sufficiently powerful to solve the case when the dimension of the witness subspace $W$ is polynomially bounded in the length of the input. This leads us naturally to the quantum analog of the
promise problem class $\fp$. This complexity class was defined by Allender~\cite{All86} as the set of problems
in $\np$ with the additional constraint that there is a polynomial $q$ such that on every positive
instance of length $n$, the number of witnesses is at most $q(n)$. The class $\fp$ was extensively
studied in the context of counting complexity classes~\cite{AR88, Tor90, KSTT92, HJV93, RRW94}.
We define  $\fqma$, the quantum analog of $\fp$, as the set of promise problems in $\qma$ for which there exists  
a polynomial $q$ with the following properties: on negative instances every message
of the prover is accepted by the verifier with probability at most $1/3$; on a positive instance $x$ there exists a subspace 
$W_x$ of dimension between $1$ and $q(|x|)$, such that all pure states in $W_x$ are accepted with probability at least $2/3$, while all pure states orthogonal to $W_x$ are accepted with probability at most $1/3$. Our main theorem 
extends the result of Valiant and Vazirani to this complexity class. 
More precisely, we show that $\fqma$ is deterministic polynomial-time Turing-reducible to $\uqma$. 

\vspace{0.1in}

\noindent {\bf Main Theorem }
$
~~~ \fqma \subseteq \p^\uqma.
$

\vspace{0.1in}

\noindent The first idea to establish this result is that instead of manipulating the states within the original space $\h$ of dimension $K$, we consider its $t$-fold tensor powers $\hatt$. At first glance, this does not seem to be going in the right direction because the dimension of $W^{\otimes t}$ grows as $d^t$, where $d$ is the dimension of the witness space $W$.  
Our second idea is to consider the alternating subspace $\alt$ of $\hatt$ whose dimension is $K \choose t$. The important thing to notice is that the dimension of the intersection $\alt \cap \watt$ is equal to one when $t=d$. The reason is that this intersection is in fact equal to the alternating subspace of $\watt$ whose dimension is $d \choose t$. Therefore, we will choose this one-dimensional subspace as our unique quantum witness.  Of course, we don't know exactly the dimension of $W$, but since we have a polynomial upper bound $q(|x|)$ on it, we just try every
possible value $t$ between 1 and $q(|x|)$.

For a fixed $t$, we would ideally implement $\Pi_\watt  \cdot \Pi_{\alt}$, the product of the projection
to $\alt$ followed by the projection to $\watt$. The reason that this would work is the following. The unique
pure state in $\alt \cap \watt$ (up to a global phase) is clearly accepted with probability 1. On the other hand, 
we claim that any state $\ket{\phi}$ orthogonal to that is rejected with probability 1. Indeed, $\ket{\phi}$ 
can be decomposed as $\ket{\phi_1} +  \ket{\phi_2}$, where $\ket{\phi_1} \in  \alt^\bot$ 
and $\ket{\phi_2} \in \watt^\bot$. Therefore $\ket{\phi_1}$ is rejected by $\Pi_{\alt}$ and
$\ket{\phi_2}$ is rejected by $\Pi_\watt$. This implies the claim since 
we can show that the two projectors actually commute.

We can efficiently implement $ \Pi_{\alt}$ by a procedure we call the $\alttest$. A similar procedure to ours, implementing efficiently the projection to the symmetric subspace $\sym$ of $\hatt$, was proposed by Barenco et al.~\cite{BBDEJM97} as the basis
of a method for the stabilization of quantum computations. In fact, in the two-fold tensor product case, the two procedures coincide and become the well know $\swaptest$ which was used by Buhrman et al.~\cite{BCWW01} for deciding if two given pure states are close or far apart. 

We can't implement $\Pi_\watt$ exactly, but we can approximate it efficiently by a procedure called the
$\wittest$. This test just applies 
independently to all the $t$ components of the state the procedure at our disposal which decides 
in $\h$ whether a state is a witness or not, and accepts if all applications accept.
There is only one difficulty left: since $ \Pi_{\alt}$
and the $\wittest$ don't necessarily commute, our previous argument which showed that states
in $\watt^\bot$ were rejected with probability 1 doesn't work anymore. We overcome this difficulty
by showing that the commutativity of the two projections implies that the projections to $\alt$ of
such states are also in $\watt^\bot$, and therefore get rejected with high probability by the $\wittest$.

An interesting feature of our reduction is that it is deterministic,
while the Valiant-Vazirani procedure is probabilistic. It is fair to
say though that classically the witnesses can all be enumerated when
their number is bounded by a polynomial. Therefore, in that case, the
reduction can also be done deterministically, implying that $\fp
\subseteq \p^{\pup}$. We believe that reducing $\qma$ to a unique
witness, which this paper leaves as an open question, will require a
probabilistic or a quantum procedure.

The rest of the paper is structured as follows. In Section~\ref{prelim} we state some facts about the interaction
of the tensor products of subspaces with the alternating subspace. In Section~\ref{classes} 
we define the complexity classes we are concerned with. We give two definitions for
$\fqma$ and show that they are equivalent. Section~\ref{Theorem} is entirely 
devoted to the proof of our main result. Finally in the Appendix~\ref{sec:third} we consider a third definition and show a weak equivalence with the previous ones. 

The results in the paper appeared initially as Arxiv preprint quant-ph/0906.4425 by Jain, Kerenidis, Santha and Zhang. Some of this work was done independently by Sattath and Kuperberg and an initial result in  that direction appeared in p.30 of \cite{Sattath}. 
%Last, Appendix \ref{sec:third} discusses a potential third definition
%of $\fqma$. 

\section{Preliminaries} \label{prelim}

In this section we present definitions and lemmas that we will need in the proof of our main result. 

We represent by $[t]$  the set $\{1,2, \ldots, t\}$. For a Hilbert space $\h$, we denote by $\dim(\h)$ the dimension of $\h$. 
For a subspace $S$ of $\h$, let $S^\bot$ represent the subspace of $\h$ orthogonal to $S$,
and let $\Pi_S$ denote the projector onto $S$. For subspaces $S_1,S_2$ of $\h$, 
their {\em direct sum} $S_1 + S_2$ is defined as $\spn(S_1 \cup S_2)$, and when $S_1,S_2$ are orthogonal subspaces, we denote their (orthogonal) direct sum  by  $S_1 \oplus S_2$. The following relations are standard.
\begin{Fact} \label{fact:stand}
\begin{enumerate}
\item Let $S_1, S_2$ be subspaces of a Hilbert space $\h$. Then $(S_1 \cap S_2)^\bot = S_1^\bot + S_2^\bot $.
\item 
Let $S_1, S_2$ be subspaces of Hilbert spaces $\h_1$, $\h_2$ respectively.
Then, $(S_1 \otimes S_2)^\bot = \\(S_1^\bot \otimes \h_2) + (\h_1 \otimes S_2^\bot)
 ~=~(S_1^\bot \otimes \h_2) \oplus (S_1 \otimes S_2^\bot)$. 
\end{enumerate}
\end{Fact}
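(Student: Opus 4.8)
The plan is to treat the two parts separately, deriving the second from the first. Throughout I work in the finite-dimensional Hilbert spaces relevant to the paper, so that every subspace is closed, the double-complement involution $(S^\bot)^\bot = S$ is available, and sums of subspaces are again subspaces.

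For part 1 the cleanest route is to prove the \emph{dual} identity $(S_1^\bot + S_2^\bot)^\bot = S_1 \cap S_2$ and then take orthogonal complements of both sides. The dual identity is a direct unwinding of the definitions: a vector is orthogonal to $S_1^\bot + S_2^\bot = \spn(S_1^\bot \cup S_2^\bot)$ precisely when it is orthogonal to both $S_1^\bot$ and $S_2^\bot$, i.e.\ when it lies in $(S_1^\bot)^\bot \cap (S_2^\bot)^\bot = S_1 \cap S_2$. Applying $(\cdot)^\bot$ to both sides and invoking the involution property then yields $S_1^\bot + S_2^\bot = (S_1 \cap S_2)^\bot$, which is exactly the claim.

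For part 2 the key observation is the factorization $S_1 \otimes S_2 = (S_1 \otimes \h_2) \cap (\h_1 \otimes S_2)$. Granting this, part 1 immediately gives $(S_1 \otimes S_2)^\bot = (S_1 \otimes \h_2)^\bot + (\h_1 \otimes S_2)^\bot$, and it remains only to identify each complement. Since $\h_2^\bot = \{0\}$, one has $(S_1 \otimes \h_2)^\bot = S_1^\bot \otimes \h_2$, and symmetrically $(\h_1 \otimes S_2)^\bot = \h_1 \otimes S_2^\bot$; this establishes the first claimed equality. I would verify the factorization itself by choosing orthonormal bases of $\h_1$ and $\h_2$ adapted to the decompositions $\h_1 = S_1 \oplus S_1^\bot$ and $\h_2 = S_2 \oplus S_2^\bot$, and comparing the product basis vectors spanning each side; both turn out to be spanned by exactly those products whose left factor lies in $S_1$ and whose right factor lies in $S_2$.

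For the final orthogonal direct-sum form, I would expand $\h_1 \otimes S_2^\bot = (S_1 \otimes S_2^\bot) \oplus (S_1^\bot \otimes S_2^\bot)$ using $\h_1 = S_1 \oplus S_1^\bot$. The second summand $S_1^\bot \otimes S_2^\bot$ is contained in $S_1^\bot \otimes \h_2$ and is therefore absorbed, leaving the sum equal to $(S_1^\bot \otimes \h_2) + (S_1 \otimes S_2^\bot)$. These two subspaces are orthogonal: for elementary tensors $a \otimes b$ with $a \in S_1^\bot$ and $c \otimes d$ with $c \in S_1$, the inner product factors as $\langle a, c \rangle \langle b, d \rangle = 0$ because $a \perp c$. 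Hence the sum is genuinely an orthogonal direct sum, completing the second equality. None of these steps presents a real obstacle; the only point requiring slight care is the tensor factorization $S_1 \otimes S_2 = (S_1 \otimes \h_2) \cap (\h_1 \otimes S_2)$, which is where an explicit adapted basis makes the argument airtight rather than merely plausible.
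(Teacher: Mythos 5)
The paper states this Fact without proof, labelling it ``standard,'' so there is no in-paper argument to compare against; your job here is simply to supply a correct verification, and you do. Part 1 via the dual identity $(S_1^\bot + S_2^\bot)^\bot = S_1 \cap S_2$ plus the finite-dimensional involution $(S^\bot)^\bot = S$ is exactly the standard route. For part 2, the factorization $S_1 \otimes S_2 = (S_1 \otimes \h_2) \cap (\h_1 \otimes S_2)$, the identification $(S_1 \otimes \h_2)^\bot = S_1^\bot \otimes \h_2$, and the absorption of $S_1^\bot \otimes S_2^\bot$ into $S_1^\bot \otimes \h_2$ are all correct, and your adapted-basis decomposition of $\h_1 \otimes \h_2$ into the four mutually orthogonal blocks $S_1 \otimes S_2$, $S_1 \otimes S_2^\bot$, $S_1^\bot \otimes S_2$, $S_1^\bot \otimes S_2^\bot$ is the right way to make both the intersection claim and the final orthogonal-direct-sum claim airtight (the orthogonality of the two summands extends from elementary tensors to general vectors by linearity, as you note). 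One small point worth flagging explicitly: everything you use, in particular $(S^\bot)^\bot = S$ and the closedness of sums of subspaces, is automatic only because the ambient spaces $\bit^{\otimes k}$ are finite-dimensional; you say this at the outset, which is exactly the right caveat.
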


Let $\bit$ represent the two-dimensional complex Hilbert space and let $\{\ket{0}, \ket{1}\}$ be the computational basis for $\bit$. For a natural number $k$, the computational basis of $\bit^{\otimes k}$ (the $k$-fold tensor of $\bit$) consists of $\{\ket{r}: r \in \{0,1\}^k\}$,
where $\ket{r}$ denotes the tensor product $\ket{r_1} \otimes \ldots \otimes \ket{r_k}$ for the $k$-bit string $r = r_1 \ldots r_k$.
Fix $k$ and let $\h$ denote $\bit^{\otimes k}$ and let $K = 2^k$. By a {\em pure state} in $\h$, we mean a unit vector in $\h$. A {\em mixed state} or just {\em state} is a positive semi-definite operator in $\h$ with trace $1$. We refer the reader to the text~\cite{NielsenC00} for concepts related to quantum information theory. For a natural number $t \in [K]$,  we will think of states of $\hatt$ as consisting of $t$  registers, 
where the content of each register is a state with support in $\h$.

We will consider the interaction of $\watt$, where $W$ is a $d$-dimensional subspace of $\h$ for some $d$ satisfying $2 \leq t \leq d \leq K$, with the alternating and symmetric subspaces of $\hatt$.
Let $S_t$ denote the set of all permutations $\pi : [t] \rightarrow [t]$. For a permutation $\pi \in S_t$, let the unitary operator $U_{\pi}$, acting on $\hatt$, be given by 
$
%\forall s_1, s_2, \ldots, s_t \in \{0,1\}^k: \quad U_{\pi} \ket{s_1} \otimes \ldots \otimes \ket{s_t} =  
\ket{s_{\pi(1)}} \otimes  \ldots \otimes \ket{s_{\pi(t)}} \enspace .$

For permutations $\pi_1, \pi_2$, let $\pi_1 \circ \pi_2$ represent their composition. It is easily seen 
that $U_{\pi_1 \circ \pi_2} = U_{\pi_1} U_{\pi_2}$. For distinct $i,j \in [t]$, let $\pi_{ij}$ be the transposition
of $i$ and $j$. 
%given by $\pi_{ij}(i) = j, \pi_{ij}(j) =i,$ and  for the rest of the $k \in [d]: \; \pi_{ij}(k) = k$. 
For all distinct $i, j\in [t]$, the symmetric subspace of $\watt$ 
with respect to $i$ and $j$ is given by  \newline $\sym_{ij}^{\watt} = \{\ket{\phi}\in W^{\otimes t}: U_{\pi_{ij}}\ket{\phi} = \ket{\phi}\}$, 
and the {\em symmetric subspace} of $\watt$ is defined as $\sym^{\watt} = \cap_{i \neq j} \sym_{ij}^{\watt}$.
Similarly, for all distinct $i, j\in [t]$, the alternating subspace of $\watt$ with respect to $i$ and $j$ is
defined as $ \alt_{ij}^{\watt} =   \{\ket{\phi}\in W^{\otimes t}: U_{\pi_{ij}}\ket{\phi} = -\ket{\phi}\}$,
and the {\em alternating subspace} of $\watt$ is defined as
$\alt^{\watt} = \cap_{i \neq j} \alt_{ij}^{\watt}$.

The subspaces $\sym^{W^{\otimes t}}$ and 
$ \alt^{W^{\otimes t}}$ are of 
dimension $d+t-1 \choose t$ and $d \choose t$ respectively~\cite{Bhatia}. 
%When $t=2$,  the subspaces $\sym^{\h^{\otimes 2}}$ and $\alt^{\h^{\otimes 2}}$ are orthogonal complement and are of respective dimensions 
%$k+1 \choose 2$ and $k \choose 2$. 
In particular, $\alt^{\h^{\otimes 2}}$ and $\sym^{\h^{\otimes 2}}$ have respective dimensions $K+1 \choose 2$ and $K \choose 2$  and since they are orthogonal, we have %Indeed for any $\ket{\psi_a}\in \alt^{\h^{\otimes 2}}$ and any $\ket{\psi_s}\in \sym^{\h^{\otimes 2}}$, by definition 
%\begin{equation*}
%	$\qip{\psi_a}{\psi_s} = \bra{\psi_a} (U_{\pi_{12}}\ket{\psi_s}) = (\bra{\psi_a} U_{\pi_{12}}^\dag) \ket{\psi_s} = - \qip{\psi_a}{\psi_s},$
%\end{equation*} 
%which implies that $\ket{\psi_a}$ and $\ket{\psi_s}$ are orthogonal.
%, and therefore span the entire $\h^{\otimes 2}$.  (that is $(\alt^{\h^{\otimes 2}})^{\bot} = \sym^{\h^{\otimes 2}}$). 
$\hh = \alt^{\h^{\otimes 2}} \oplus \sym^{\h^{\otimes 2}}$.
%, since $\sym^{\h^{\otimes 2}}$ and $\alt^{\h^{\otimes 2}}$ are of  dimensions $K+1 \choose 2$ and $K \choose 2$ respectively. 
This implies that for every distinct $i,j \in [t],\;$ we have  $\alt_{ij}^{\watt} \oplus \sym_{ij}^{\watt} = \watt$.
%Similar arguments show that $\alt_{ij}^{\hatt} \oplus \sym_{ij}^{\hatt} = \hatt$ and $\alt_{ij}^{\watt} \oplus \sym_{ij}^{\watt} = \watt$ for all distinct $i,j \in [t]$. 
It follows that
\begin{Claim}\label{claim:Tdecompij}
 $(\alt^{\watt})^\bot \cap \watt =  \sum_{i \neq j} \sym^{\watt}_{ij}  \enspace .$
\end{Claim}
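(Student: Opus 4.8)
The plan is to recognize the claim as an instance of the De Morgan-type law for subspaces recorded in the first part of Fact~\ref{fact:stand}, applied with the ambient Hilbert space taken to be $\watt$ rather than $\hatt$. The crucial input is the already-established relation $\alt_{ij}^{\watt}\oplus\sym_{ij}^{\watt}=\watt$ for every distinct $i,j\in[t]$, which says precisely that, \emph{inside} the Hilbert space $\watt$, the orthogonal complement of $\alt_{ij}^{\watt}$ is exactly $\sym_{ij}^{\watt}$.

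First I would fix the meaning of ``orthogonal complement''. For a subspace $S\subseteq\watt$, write $S^{\perp_W}$ for its complement computed inside $\watt$; by definition $S^{\perp_W}$ is the set of vectors of $\watt$ orthogonal to $S$, which is exactly $S^\bot\cap\watt$, the ordinary complement in $\hatt$ intersected with $\watt$. Thus the left-hand side of the claim, $(\alt^{\watt})^\bot\cap\watt$, is nothing but $(\alt^{\watt})^{\perp_W}$, and it suffices to show $(\alt^{\watt})^{\perp_W}=\sum_{i\neq j}\sym_{ij}^{\watt}$.

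Next I would apply the first part of Fact~\ref{fact:stand} with Hilbert space $\watt$ to the defining intersection $\alt^{\watt}=\cap_{i\neq j}\alt_{ij}^{\watt}$. Since each $\alt_{ij}^{\watt}$ is a subspace of $\watt$, the law (extended from two subspaces to the finitely many pairs $i\neq j$ by a routine induction of the form $(S_1\cap S_2\cap\cdots)^{\perp_W}=S_1^{\perp_W}+(S_2\cap\cdots)^{\perp_W}=\cdots$) gives $(\alt^{\watt})^{\perp_W}=\sum_{i\neq j}(\alt_{ij}^{\watt})^{\perp_W}$. Substituting $(\alt_{ij}^{\watt})^{\perp_W}=\sym_{ij}^{\watt}$, which is the content of $\alt_{ij}^{\watt}\oplus\sym_{ij}^{\watt}=\watt$, yields exactly $\sum_{i\neq j}\sym_{ij}^{\watt}$, completing the argument.

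The only point that requires care --- and the one I would flag as the main obstacle --- is the bookkeeping of the ambient space in which complements are taken. Fact~\ref{fact:stand} is stated abstractly for subspaces of \emph{any} Hilbert space, so one is entitled to instantiate it with $\watt$; the one-line identity $S^{\perp_W}=S^\bot\cap\watt$ is what licenses passing freely between complements in $\watt$ and complements in $\hatt$. If instead one applied Fact~\ref{fact:stand} directly in $\hatt$, one would obtain $(\alt^{\watt})^\bot=\sum_{i\neq j}(\alt_{ij}^{\watt})^\bot$ with complements in $\hatt$, and intersecting with $\watt$ would still require the same observation to identify $(\alt_{ij}^{\watt})^\bot\cap\watt$ with $\sym_{ij}^{\watt}$; working inside $\watt$ from the start avoids this detour.
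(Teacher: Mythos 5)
Your argument is correct and is essentially the paper's own proof: both rest on Fact~\ref{fact:stand}(1) applied to the defining intersection $\alt^{\watt}=\cap_{i\neq j}\alt_{ij}^{\watt}$ together with the decomposition $\alt_{ij}^{\watt}\oplus\sym_{ij}^{\watt}=\watt$. The only difference is bookkeeping --- the paper takes complements in $\hatt$, writes $(\alt_{ij}^{\watt})^\bot=\sym_{ij}^{\watt}\oplus(\watt)^\bot$, and intersects with $\watt$ at the end, whereas you work relative to $\watt$ throughout --- which changes nothing of substance.
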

\begin{proof}%{Claim~\ref{claim:Tdecompij}}
%Note that $(\alt^{\watt})^\bot \cap \watt$ is nothing but the orthogonal complement subspace of $\alt^{\watt}$ where the complement is with respect to $\watt$. Therefore %By the definition of $\alt^\watt$ and Fact~\ref{fact:stand}, we have the conclusion. 
Since $\alt_{ij}^{\watt} \oplus \sym_{ij}^{\watt} = \watt$,
we have $(\alt_{ij}^{\watt})^\bot = \sym_{ij}^{\watt} \oplus (\watt)^\bot$. Therefore
\begin{eqnarray*}
(\alt^{\watt})^\bot \cap \watt
 & = & ((\cap_{i \neq j} \alt_{ij}^{\watt})^\bot ) \cap \watt \; \mbox{(from def. of $\alt^\watt$)} \\
& = & (\sum_{i \neq j} (\alt_{ij}^{\watt})^\bot ) \cap \watt  \quad \mbox{(from Fact~\ref{fact:stand})}\\ 
& = & ( \sum_{i \neq j} (\sym^{\watt}_{ij} \oplus (\watt)^\bot) )\cap \watt \\
& = & ( (\sum_{i \neq j} \sym^{\watt}_{ij} ) \oplus (\watt)^\bot )\cap \watt \\
& = &  \sum_{i \neq j} \sym^{\watt}_{ij} \enspace .
\end{eqnarray*}
The last equality holds since  $(\sum_{i \neq j} \sym^{\watt}_{ij} ) \subseteq \watt$.
\end{proof}

Note that for $W=\h$ the claim states that $(\alt^{\hatt})^\bot  =  \sum_{i \neq j} \sym^{\hatt}_{ij} $.
For us, a particularly important case is when the number of registers $t$ is equal to $d$, the
dimension of the subspace $W$. Then the alternating subspace  $ \alt^{W^{\otimes d}}$ is 
one-dimensional. 
Let $\{ \ket{\psi_1}, \ldots, \ket{\psi_d} \} $ be any orthonormal basis of $W$, and
let the vector $\ket{W_{alt}} \in \watd$ be defined as
 $\ket{W_{alt}} = \frac{1}{\sqrt{d!}}\sum_{\pi\in S_d} \mbox{sgn}(\pi) \; 
 U_\pi \ket{\psi_1} \ldots \ket{\psi_d}$, where $\mbox{sgn}(\pi)$ denotes the sign of the permutation $\pi$.
The following claim states  that $\ket{W_{alt}}$ spans the one-dimensional
subspace $\alt^{W^{\otimes d}}$. 
This  immediately implies that
$\ket{W_{alt}}$ is independent of the choice of the basis (up to a global phase). 
\begin{Claim} \label{claim:unique}
$ \alt^{W^{\otimes d}} = \spn\{\ket{W_{alt}}\} \enspace .$
\end{Claim}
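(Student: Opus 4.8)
The plan is to exploit the fact, already recorded above, that $\alt^\watd$ is one-dimensional (its dimension being $\binom{d}{d}=1$). It therefore suffices to exhibit $\ket{W_{alt}}$ as a \emph{nonzero} vector lying in $\alt^\watd$; the spanning equality follows at once. I would establish these two properties, membership and nonvanishing, separately.

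First I would check membership, namely that $U_{\pi_{ij}}\ket{W_{alt}} = -\ket{W_{alt}}$ for every transposition $\pi_{ij}$, so that $\ket{W_{alt}} \in \cap_{i\neq j}\alt_{ij}^\watd = \alt^\watd$. Each summand $U_\pi\ket{\psi_1 \ldots \psi_d}$ is a tensor of basis vectors of $W$, so $\ket{W_{alt}} \in \watd$ to begin with. Applying $U_{\pi_{ij}}$ and using the composition rule $U_{\pi_{ij}}U_\pi = U_{\pi_{ij}\circ\pi}$, I would reindex the sum by $\sigma = \pi_{ij}\circ\pi$; since $\pi\mapsto\pi_{ij}\circ\pi$ is a bijection of $S_d$ and $\sgn(\pi_{ij}\circ\pi) = -\sgn(\pi)$, the sum reproduces $-\ket{W_{alt}}$. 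This is the required antisymmetry.

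Next I would verify that $\ket{W_{alt}}$ does not vanish by computing $\norm{\ket{W_{alt}}}^2$ directly, which I expect equals $1$. Expanding the square gives a double sum over $\pi,\sigma\in S_d$ of $\sgn(\pi)\sgn(\sigma)\,\bra{\psi_1 \ldots \psi_d} U_\pi^\dagger U_\sigma \ket{\psi_1 \ldots \psi_d}$. Using $U_\pi^\dagger = U_{\pi^{-1}}$ (unitarity) together with the composition rule, the operator collapses to $U_{\pi^{-1}\circ\sigma}$; and since $\{\ket{\psi_i}\}$ is orthonormal, the matrix element $\bra{\psi_1 \ldots \psi_d}U_\tau\ket{\psi_1 \ldots \psi_d} = \prod_i \qip{\psi_i}{\psi_{\tau(i)}}$ equals $1$ if $\tau$ is the identity and $0$ otherwise. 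Only the diagonal terms $\sigma=\pi$ survive, each contributing $\sgn(\pi)^2 = 1$, and there are $d!$ of them, giving $\norm{\ket{W_{alt}}}^2 = \frac{1}{d!}\cdot d! = 1$.

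Finally, having produced a nonzero (indeed unit) vector in the one-dimensional space $\alt^\watd$, I conclude $\alt^\watd = \spn\{\ket{W_{alt}}\}$, and the basis-independence of $\ket{W_{alt}}$ up to a global phase is then automatic, since any orthonormal basis of $W$ yields a vector spanning the same one-dimensional subspace. The computation is entirely routine; the only point requiring care is the bookkeeping of the permutation-action convention, but since both the composition identity $U_{\pi_1\circ\pi_2}=U_{\pi_1}U_{\pi_2}$ and the unitarity of each $U_\pi$ are given, the two reindexing steps go through formally without revisiting the definition of $U_\pi$.
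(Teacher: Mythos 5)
Your proof is correct and follows essentially the same route as the paper's: show that $\ket{W_{alt}}$ lies in $\watd$ and is antisymmetric under every transposition by reindexing the sum over $S_d$ via $\sigma = \pi_{ij}\circ\pi$, then invoke $\dim(\alt^{\watd}) = \binom{d}{d} = 1$. The only difference is your explicit computation of $\norm{\ket{W_{alt}}}^2 = 1$, which the paper leaves implicit even though the nonvanishing of $\ket{W_{alt}}$ is genuinely needed to close the dimension argument; including it is a small but welcome improvement.
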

\begin{proof}%{Claim~\ref{claim:unique}}
We show that $\ket{W_{alt}} \in \alt^{W^{\otimes d}}$. This implies the statement since 
$\dim( \alt^{W^{\otimes d}}) = {d \choose d} = 1$.
For any distinct $i,j \in [d]$ we show $\ket{W_{alt}} \in \alt^{W^{\otimes d}}_{ij}$. For a permutation $\pi \in S_d$, we set $\pi' = \pi_{ij} \circ \pi $. 
We then have
\begin{eqnarray*} 
U_{\pi_{ij}} \ket{W_{alt}} 
& = & U_{\pi_{ij}} \frac{1}{\sqrt{d!}}\sum_{\pi\in S_d} \mbox{sgn}(\pi) \; U_\pi \ket{\psi_1} \otimes \ldots \otimes \ket{\psi_d} \\
& = & \frac{1}{\sqrt{d!}}\sum_{\pi\in S_d} \mbox{sgn}(\pi) \; U_{\pi_{ij} \circ \pi} \ket{\psi_1} \otimes \ldots \otimes \ket{\psi_d}  \\
& = & \frac{1}{\sqrt{d!}}\sum_{\pi'\in S_d} \mbox{sgn}(\pi_{ij}^{-1} \circ \pi') \; U_{\pi'} \ket{\psi_1} \otimes \ldots \otimes \ket{\psi_d} \\
& = & (-1) \cdot \frac{1}{\sqrt{d!}}\sum_{\pi'\in S_d} \mbox{sgn}( \pi') \; U_{\pi'} \ket{\psi_1} \otimes \ldots \otimes \ket{\psi_d} \\
& = & - \ket{W_{alt}} \enspace ,
\end{eqnarray*} 
where we used that $\mbox{sgn}(\pi_{ij}^{-1} \circ \pi') = -\mbox{sgn}(\pi')$.
\end{proof}

Next, we show that the projections on the spaces $\alt^{\hatt}$ and
$\watt$ commute for any $2 \leq t \leq d$. 
\begin{Claim} \label{claim:commute}
Let $2 \leq t \leq d$. \\
Then, $ \Pi_{\alt^{\hatt}} \cdot \Pi_{\watt} = \Pi_{\watt} \cdot \Pi_{\alt^{\hatt}} \enspace .$
\end{Claim}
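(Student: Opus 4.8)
The plan is to exploit the fact that $\watt$ is the $t$-fold tensor power of the \emph{same} subspace $W$, so its projector factorizes as $\Pi_{\watt} = \Pi_W^{\otimes t}$ and is therefore invariant under permutation of the tensor factors, whereas $\Pi_{\alt^{\hatt}}$ is built entirely out of those same permutation operators $U_\pi$. Once both of these observations are in place, commutativity falls out immediately.

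First I would record the antisymmetrizer formula $\Pi_{\alt^{\hatt}} = \frac{1}{t!}\sum_{\pi\in S_t}\sgn(\pi)\,U_\pi$. To justify it, note that a vector lies in $\alt^{\hatt} = \cap_{i\neq j}\alt^{\hatt}_{ij}$ iff $U_{\pi_{ij}}\ket{\phi} = -\ket{\phi}$ for every transposition $\pi_{ij}$; since the transpositions generate $S_t$ and $\sgn$ is multiplicative, this is equivalent to $U_\pi\ket{\phi} = \sgn(\pi)\ket{\phi}$ for all $\pi \in S_t$, that is, to $\ket{\phi}$ being fully antisymmetric. The operator $A = \frac{1}{t!}\sum_{\pi}\sgn(\pi)\,U_\pi$ is self-adjoint (using $U_\pi^\dagger = U_{\pi^{-1}}$ together with $\sgn(\pi^{-1}) = \sgn(\pi)$) and idempotent, and its range is exactly the set of fully antisymmetric vectors, so $A = \Pi_{\alt^{\hatt}}$.

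The key step is then to show that every $U_\pi$ commutes with $\Pi_{\watt}$. Writing $\Pi_{\watt} = \Pi_W^{\otimes t}$, conjugation by $U_\pi$ merely permutes the tensor slots of a product operator, and since the identical factor $\Pi_W$ occupies every slot this leaves the operator unchanged: $U_\pi\,(\Pi_W \otimes \cdots \otimes \Pi_W)\,U_\pi^{-1} = \Pi_W \otimes \cdots \otimes \Pi_W$. Hence $U_\pi \Pi_{\watt} = \Pi_{\watt} U_\pi$ for all $\pi \in S_t$. Because $\Pi_{\alt^{\hatt}}$ is a linear combination of the $U_\pi$ by the formula above, it commutes with $\Pi_{\watt}$ as well, yielding $\Pi_{\alt^{\hatt}} \Pi_{\watt} = \Pi_{\watt} \Pi_{\alt^{\hatt}}$.

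I do not anticipate a serious obstacle; the only point needing care is the first step, establishing that the alternating projector really equals the antisymmetrizer. An alternative that sidesteps the explicit formula is to argue directly with invariant subspaces: since $\Pi_{\watt}$ commutes with each $U_{\pi_{ij}}$, it preserves every eigenspace $\ker(U_{\pi_{ij}} + I)$ and hence their intersection $\alt^{\hatt}$; being self-adjoint it then also preserves $(\alt^{\hatt})^\bot$, and any self-adjoint operator preserving both a subspace and its orthogonal complement commutes with the projector onto that subspace.
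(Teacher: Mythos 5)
Your proof is correct, but it takes a genuinely different route from the paper's. The paper never writes down the antisymmetrizer: instead it decomposes $\watt = \alt^{\watt} \oplus T$ with $T = (\alt^{\watt})^\bot \cap \watt$, invokes its Claim~\ref{claim:Tdecompij} to identify $T$ with $\sum_{i\neq j}\sym^{\watt}_{ij} \subseteq (\alt^{\hatt})^\bot$, and observes $\alt^{\watt}\subseteq\alt^{\hatt}$; both products then evaluate to $\Pi_{\alt^{\watt}}$, which gives commutativity together with the sharper identity $\Pi_{\alt^{\hatt}}\Pi_{\watt} = \Pi_{\alt^{\watt}} = \Pi_{\alt^{\hatt}\cap\watt}$ that the paper reuses in spirit in Claim~\ref{claim:projection}. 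Your argument instead rests on two clean observations: $\Pi_{\alt^{\hatt}} = \frac{1}{t!}\sum_{\pi}\sgn(\pi)U_\pi$ lies in the group algebra generated by the $U_\pi$, and $\Pi_{\watt} = \Pi_W^{\otimes t}$ is fixed under conjugation by every $U_\pi$ because all $t$ tensor factors are identical. This is more self-contained (it does not route through the $\sym_{ij}$ decomposition of Claim~\ref{claim:Tdecompij}) and strictly more general: it shows $\Pi_{\watt}$ commutes with every element of the group algebra of $S_t$, hence with the projector onto the symmetric subspace or any other isotypic component, and it makes no use of the hypothesis $t \leq d$. The only burden you take on is verifying that the antisymmetrizer really is the projector onto $\alt^{\hatt}$ (self-adjointness, idempotence, and that its range is the fully antisymmetric subspace, which coincides with $\cap_{i\neq j}\alt^{\hatt}_{ij}$ because transpositions generate $S_t$); your sketch of this is sound, and your alternative closing argument via invariant subspaces sidesteps even that computation. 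Either of your two variants would serve as a valid replacement for the paper's proof.
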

\begin{proof}%{Claim~\ref{claim:commute}}
Set $T =  (\alt^{\watt})^\bot \cap \watt $. Then $\watt = \alt^{\watt}
\oplus T$ and  hence, $\Pi_\watt = \Pi_{\alt^{\watt}} +\Pi_{T}$.
We have
\begin{eqnarray*}
T & = & ( \alt^{W^{\otimes t}} )^{\bot} \cap \watt \\
& = & \sum_{i \neq j} \sym^{\watt}_{ij} \quad \mbox{(from Claim~\ref{claim:Tdecompij}) } \\
& \subseteq &  \sum_{i \neq j} \sym^{\h^{\otimes t}}_{ij} \quad 
\mbox{ (by definition)}\\
& = & (\alt^\hatt)^\bot \quad \mbox{(from Claim~\ref{claim:Tdecompij}) } \enspace .
\end{eqnarray*}
This implies that $  \Pi_{\alt^\hatt} \cdot \Pi_{T} = \Pi_{T} \cdot \Pi_{\alt^\hatt}  = {\bf 0} .$ 
Also, $\alt^{\watt} \subseteq  \alt^\hatt $ since $\alt^{\watt} =
\alt^{\hatt} \cap \watt$. Therefore, we have 
%$$   \Pi_{\alt^\hd}  \cdot \ketbra{W_{alt}}  = \ketbra{W_{alt}} \cdot  \Pi_{\alt^\hd} = \ketbra{W_{alt}} \enspace .$$ 
%Now using Eq.~\ref{eq:decompW} and the above two equations we have:
$$\Pi_{\alt^\hatt} \cdot \Pi_\watt  =   \Pi_{\alt^\hatt} \cdot
(\Pi_{\alt^\watt} + \Pi_{T}) = \Pi_{\alt^\watt} \enspace .$$ 
Similarly
$$\Pi_\watt  \cdot \Pi_{\alt^\hatt}  = \Pi_{\alt^\watt} \enspace .$$ 
Hence $\Pi_{\alt^\hatt} \cdot \Pi_\watt = \Pi_\watt  \cdot
\Pi_{\alt^\hatt}$.
\end{proof}

This commutativity relation enables us to derive the following property 
%which will be useful in the analysis of our algorithm.
\begin{Claim}\label{claim:projection}
For any state $\ket{\phi} \in (\alt^{\watd})^\bot$, we have  
$\Pi_{\alt^{\hd}} \ket{\phi} \in ( \watd )^\bot$. 
\end{Claim}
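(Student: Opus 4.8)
The plan is to reduce this membership statement to a single annihilation identity. Recall that a vector lies in $(\watd)^\bot$ precisely when it is killed by the projector $\Pi_{\watd}$. So rather than trying to describe $\Pi_{\alt^{\hd}}\ket{\phi}$ explicitly, I would aim to show directly that $\Pi_{\watd} \cdot \Pi_{\alt^{\hd}} \ket{\phi} = 0$, which is exactly the assertion that $\Pi_{\alt^{\hd}}\ket{\phi} \in (\watd)^\bot$.

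The key tool is the commutativity already established in Claim~\ref{claim:commute}. Specializing that argument to $t = d$ gives not merely that the two projectors commute, but the explicit value of their product: $\Pi_{\watd} \cdot \Pi_{\alt^{\hd}} = \Pi_{\alt^{\watd}}$, since the common value of both orderings is the projector onto the intersection $\alt^{\hd} \cap \watd = \alt^{\watd}$. Applying this identity to $\ket{\phi}$ yields $\Pi_{\watd} \cdot \Pi_{\alt^{\hd}} \ket{\phi} = \Pi_{\alt^{\watd}} \ket{\phi}$. By hypothesis $\ket{\phi} \in (\alt^{\watd})^\bot$, so $\Pi_{\alt^{\watd}} \ket{\phi} = 0$, and combined with the reduction of the first paragraph this closes the argument.

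There is essentially no obstacle remaining at this point: the entire content of the claim has already been absorbed into Claim~\ref{claim:commute}, and once the product of the two projectors is recognized as the projector onto their intersection, the result follows in a single line. The only point requiring a moment of care is to use the ordering $\Pi_{\watd} \cdot \Pi_{\alt^{\hd}}$ so that the factor applied first is $\Pi_{\alt^{\hd}}$, matching the vector $\Pi_{\alt^{\hd}}\ket{\phi}$ whose membership in $(\watd)^\bot$ we wish to certify; but since both orderings equal $\Pi_{\alt^{\watd}}$ by commutativity, this choice is automatic.
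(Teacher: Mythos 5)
Your proof is correct, and it takes a genuinely more direct route than the paper's. The paper first decomposes $\ket{\phi}$ as $\ket{\phi_1}+\ket{\phi_2}$ with $\ket{\phi_1}\in(\alt^{\hd})^\bot$ and $\ket{\phi_2}\in(\watd)^\bot$ (via Fact~\ref{fact:stand} applied to $\alt^{\watd}=\alt^{\hd}\cap\watd$), kills $\ket{\phi_1}$ outright, and then shows that $\Pi_{\alt^{\hd}}$ maps $(\watd)^\bot$ into itself by commuting it past $\Pi_{(\watd)^\bot}$. You bypass the decomposition entirely: you use the identity $\Pi_{\watd}\cdot\Pi_{\alt^{\hd}}=\Pi_{\alt^{\watd}}$, so that $\Pi_{\watd}\Pi_{\alt^{\hd}}\ket{\phi}=\Pi_{\alt^{\watd}}\ket{\phi}=0$ by hypothesis, which is exactly the membership $\Pi_{\alt^{\hd}}\ket{\phi}\in(\watd)^\bot$. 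The only point to make explicit is that this uses more than the bare statement of Claim~\ref{claim:commute}, which asserts only commutativity: you must either quote the identity $\Pi_{\watd}\Pi_{\alt^{\hd}}=\Pi_{\alt^{\watd}}$ from inside its proof (where both orderings are computed to equal $\Pi_{\alt^{\watd}}$), or invoke the standard fact that the product of two commuting orthogonal projectors is the orthogonal projector onto the intersection of their ranges, here $\alt^{\hd}\cap\watd=\alt^{\watd}$. Either justification is sound. What your version buys is brevity and transparency---it exposes the claim as nothing more than the annihilation identity $\Pi_{\watd}\Pi_{\alt^{\hd}}\ket{\phi}=0$; what the paper's version buys is that it only ever invokes Claim~\ref{claim:commute} as a black-box commutativity statement.
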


\begin{proof}%{Claim~\ref{claim:projection}}  
First note that $( \alt^{\watd} )^\bot = ( \alt^{\hd} \cap \watd )^\bot$
and by Fact 1, $( \alt^{\watd} )^\bot = ( \alt^{\hd}  )^\bot + (  \watd )^\bot$. 
Hence we can decompose $\ket{\phi}$ as $ \ket{\phi_1} + \ket{\phi_2}$, 
where $\ket{\phi_1} \in ( \alt^{\hd}  )^\bot$ 
and $\ket{\phi_2} \in (  \watd )^\bot$. 
As $\Pi_{\alt^{\hd}} \ket{\phi_1} = {\bf 0}$, it suffices to show that 
$\Pi_{\alt^{\hd}} \ket{\phi_2} \in ( \watd )^\bot$. For this, we prove that
$\Pi_{(\watd)^\bot}  \cdot \Pi_{\alt^\hd} \ket{\phi_2} = \Pi_{\alt^\hd} \ket{\phi_2} $.

Claim~\ref{claim:commute} implies that 
$$\Pi_{\alt^\hd} \cdot  \Pi_{(W^{\otimes d})^{\bot}}  =  \Pi_{(\watd)^\bot}  \cdot \Pi_{\alt^\hd}  \enspace .$$ 
Also,  $\ket{\phi_2}  = \Pi_{(\watd)^\bot} \ket{\phi_2}$ since $\ket{\phi_2} \in (  \watd )^\bot$.
Therefore we can conclude by the following equalities:
\begin{eqnarray*}
\Pi_{(\watd)^\bot}  \cdot \Pi_{\alt^\hd} \ket{\phi_2}   =   \Pi_{\alt^\hd}   \cdot  \Pi_{(\watd)^\bot} \ket{\phi_2} 
 =  \Pi_{\alt^\hd} \ket{\phi_2} \enspace .
\end{eqnarray*}
\end{proof}

%%%%%%%%%%%%%%%%%%%%%%%%%%%%%%%%%%%%%%%%%%%%%%%%%%%%%%%%%%%%%%%%%%%%%
%%%%%%%%%%%%%%%%%%%%%%%%%%%%%%%%%%%%%%%%%%%%%%%%%%%%%%%%%%%%%%%%%%%%%
%%%%%%%%%%%%%%%%%%%%COMPLEXITY CLASSES%%%%%%%%%%%%%%%%%%%%%%%%%%%%%%%
%%%%%%%%%%%%%%%%%%%%%%%%%%%%%%%%%%%%%%%%%%%%%%%%%%%%%%%%%%%%%%%%%%%%%
%%%%%%%%%%%%%%%%%%%%%%%%%%%%%%%%%%%%%%%%%%%%%%%%%%%%%%%%%%%%%%%%%%%%%

\section{Complexity classes} \label{classes}
In this section we define the relevant complexity classes and state the facts needed about them.
For a quantum circuit $V$, we let $V$ also represent the unitary transformation corresponding to the circuit.
%By a slight abuse of notation, for
%every natural number $k$, in $\bit^{\otimes k}$ 
%we will denote by $\Pi_{acc}$  the projection to the subspace spanned by all 
%computational basis vectors starting with $1$.
We call a {\em verification procedure} a  family 
of  quantum circuits $\{V_x : x \in \{0,1\}^*\}$ 
uniformly generated in polynomial-time, together
with polynomials $k$ and $m$ 
such that $V_x$ acts on $k(|x|) +  m(|x|)$ qubits. We refer to the first $k(|x|)$ qubits 
as {\em witness qubits} and to the last $m(|x|)$ qubits as  {\em auxiliary qubits}. To simplify notation, 
when the input $x$ is implicit in the discussion, we 
refer to $k(|x|)$ by $k$, and to $m(|x|)$ by $m$. We will make repeated use of the following projections
in  $\bit^{\otimes (k+m)}$:
$$
\Pi_{acc} = \ketbra{1} \otimes I_{k+m-1}, \ \ \  \Pi_{init} = I_k \otimes \ketbra{0^m},
$$
where $I_n$ is the identity operator on $n$ qubits.
We will also make use of the operator $\Pi_x$ defined as $\Pi_x =  \Pi_{init} V_x^\dag \Pi_{acc}  V_x \Pi_{init}$. It is easy to see that $\Pi_x$ is positive semi-definite.

Given a verification procedure, on input $x$, a Quantum Merlin-Arthur protocol proceeds in the following way: the prover Merlin sends a pure state $\ket{\psi} \in \bit^{\otimes k}$, the {\em witness},
to the verifier Arthur, who then applies 
the circuit $V_x$ to $\ket{\psi} \otimes \ket{0^{m}}$, and accepts  
if the measurement of the first qubit of the result gives $1$. We will denote the probability that Arthur accepts $x$ with witness $\ket{\psi}$ by $\prob[V_x \mbox{ outputs } \accept \mbox{ on } \ket{\psi}]$, which is equal to 
$|| \Pi_{acc} V_x (\ket{\psi} \otimes \ket{0^{m}}) ||^2 .$

A promise problem is a tuple $L=(L_{yes}, L_{no})$ with
$L_{yes} \cup L_{no} \subseteq \{0,1\}^*$ and $L_{yes} \cap L_{no} = \emptyset$. We now define the following complexity classes.
\begin{Def}
A promise problem $L=(L_{yes}, L_{no})$ is in the complexity class 
{\em Quantum Merlin-Arthur} $\qma$ if there exists a verification procedure
$\{V_x : x \in \{0,1\}^*\}$ with polynomials $k$ and $m$ such that 
\begin{enumerate}
\item 
for all $x \in L_{yes}$, there exists  a witness $\ket{\psi} $, %pure state $\ket{\psi} \in \bit^{\otimes k(|x|)}$
such that 
$|| \Pi_{acc} V_x (\ket{\psi} \otimes \ket{0^{m}}) ||^2 \geq 2/3,$
\item
for all $x \in L_{no}$, and for all witnesses $\ket{\psi}$, %pure states $\ket{\psi} \in \bit^{\otimes k(|x|)}$,
$|| \Pi_{acc} V_x (\ket{\psi} \otimes \ket{0^{m}}) ||^2 \leq 1/3.$
\end{enumerate}
%Above $\Pi_1$ represents the projection on to the subspace of $\cc^{\otimes (k(|x|) + m(|x|))}$, spanned by all the standard basis vectors starting with $1$.
\end{Def}

For a promise problem in $ \qma$, we can suppose without loss of generality that on positive instances there exists a subspace $W$ such that all unit vectors in $W$ are accepted. Hence, by putting a polynomial upper bound on the dimension of this subspace, we derive the following definition:

\begin{Def} \label{def:fqma}
Let $c,w,s : \mathbb{N} \rightarrow [0,1]$ be polynomial-time computable functions such that $c(n) > \max\{w(n), s(n)\}$ for all $n \in \mathbb{N}$. A promise problem $L=(L_{yes}, L_{no})$ is in the complexity class 
{\em Few Quantum Merlin-Arthur}
$\fqma(c,w,s)$ if there exists a verification procedure
$\{V_x : x \in \{0,1\}^*\}$ with polynomials $k$ and $ m$, and a polynomial $q$ such that 
\begin{enumerate}
\item 
for all $x \in L_{yes}$ there exists a subspace $W_x$ of $ \bit^{\otimes k}$
with $\dim (W_x) \in [q(|x|)]$ such that 
\begin{enumerate} 
\item for all
witnesses $\ket{\psi} \in W_x,$
$|| \Pi_{acc} V_x (\ket{\psi} \otimes \ket{0^{m}}) ||^2 \geq c(|x|),$
and 
\item for all witnesses
$\ket{\phi} \in {W_x}^{\bot},$
$|| \Pi_{acc} V_x (\ket{\phi} \otimes \ket{0^{m}}) ||^2 \leq w(|x|),$
\end{enumerate}
\item
for all $x \in L_{no}$ and for all pure states
$\ket{\psi} \in \bit^{\otimes k}$,
$|| \Pi_{acc} V_x (\ket{\psi} \otimes \ket{0^{m}}) ||^2 \leq s(|x|).$
\end{enumerate}
\end{Def}

\begin{Def}
The class {\em Few Quantum Merlin-Arthur}
 $\fqma$ is $\fqma(2/3, 1/3, 1/3)$.
\end{Def}

%The above definition is a natural analog of $\fp$, except that one may wonder whether the requirement of a subspace rather than a collection of orthogonal basis vectors is too strong. 
%After all, each basis $\ket{\psi_i}$ having a small projection onto some subspace $S_1$ does not guarantee a superposition $\sum_i \alpha_i \ket{\psi_i}$ (where $\sum_i |\alpha_i|^2 = 1$) to have a small projection onto $S_1$ as %well. 
We next provide an alternative definition of $\fqma(c,w,s)$ and show that the two definitions are equivalent.
% from the proof one can see that the subspaces in Definition~\ref{def:fqma} do naturally arise and exist.

\begin{Def} \label{def:altfqma}
Let $c,w,s : \mathbb{N} \rightarrow [0,1]$ be polynomial-time computable functions such that $c(n) > \max\{w(n), s(n)\}$ for all $n \in \mathbb{N}$. A promise problem $L=(L_{yes}, L_{no})$ is in the complexity class 
\newline $\afqma(c,w,s)$ if there exists a verification procedure
$\{V_x : x \in \{0,1\}^*\}$ with polynomials $k$ and $m$, and a polynomial $q$ such that 

\begin{enumerate}
\item 
for all $x \in L_{yes}$, the number of eigenvalues of $\Pi_x$ which are at least $c(|x|)$ is in $[q(|x|)]$, 
and no eigenvalue of $\Pi_x$ is in the open interval $(w(|x|), c(|x|))$.
\item
for all $x \in L_{no}$, all eigenvalues of $\Pi_x$ are at most $s(|x|)$.
\end{enumerate}
\end{Def}

We prove the following equivalence between the two definitions.
\begin{Thm}\label{thm:equivalence} 
Let $c,w,s : \mathbb{N} \rightarrow [0,1]$ be polynomial-time computable functions such that $c(n) > \max\{w(n), s(n)\}$ for all $n \in \mathbb{N}$. Then 
$\fqma(c,w,s) = \afqma(c,w,s).$
\end{Thm}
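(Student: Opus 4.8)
The plan is to reduce the whole equivalence to a single observation relating acceptance probabilities to the quadratic form of $\Pi_x$. For any candidate witness $\ket{\psi} \in \bit^{\otimes k}$, set $\ket{\Psi} = \ket{\psi}\otimes\ket{0^m}$. Since $\Pi_{init}\ket{\Psi} = \ket{\Psi}$ and $\Pi_{acc}^2 = \Pi_{acc}$, I would first establish
$$
\norm{\Pi_{acc} V_x(\ket{\psi}\otimes\ket{0^m})}^2 = \bra{\Psi} V_x^\dag \Pi_{acc} V_x \ket{\Psi} = \bra{\Psi}\Pi_x\ket{\Psi} \enspace .
$$
Thus acceptance probabilities are exactly the values of $\bra{\Psi}\Pi_x\ket{\Psi}$ on states of the form $\ket{\psi}\otimes\ket{0^m}$. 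Because $\Pi_x = \Pi_{init}(\cdots)\Pi_{init}$ annihilates the orthogonal complement of the image of $\Pi_{init}$, every eigenvector of $\Pi_x$ with nonzero eigenvalue lies in $\mathrm{Im}(\Pi_{init}) \cong \bit^{\otimes k}$. Hence the spectrum of $\Pi_x$ is that of its restriction $\widetilde{\Pi}_x$ to this $K$-dimensional witness space, together with a block of zeros; as $0 \leq w < c$ and $0 \leq s$, these extra zeros never count as eigenvalues $\geq c$, never fall in $(w,c)$, and are always $\leq s$, so they are harmless for all three conditions. Under the identification above, $\bra{\Psi}\Pi_x\ket{\Psi} = \bra{\psi}\widetilde{\Pi}_x\ket{\psi}$, and I would treat the two inclusions separately.

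For $\afqma(c,w,s) \subseteq \fqma(c,w,s)$, I would let $W_x$ be the span inside $\bit^{\otimes k}$ of the eigenvectors of $\widetilde{\Pi}_x$ with eigenvalue $\geq c$. By hypothesis their number is $d \in [q(|x|)]$, so $\dim W_x = d \in [q(|x|)]$. Any $\ket{\psi}\in W_x$ is a combination of eigenvectors with eigenvalue $\geq c$, so $\bra{\psi}\widetilde{\Pi}_x\ket{\psi}\geq c$; any $\ket{\phi}\in W_x^\bot$ is a combination of eigenvectors with eigenvalue $< c$, and the gap hypothesis (no eigenvalue in $(w,c)$) forces these to be $\leq w$, whence $\bra{\phi}\widetilde{\Pi}_x\ket{\phi}\leq w$. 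On negative instances all eigenvalues are $\leq s$, so $\widetilde{\Pi}_x \preceq sI$ and every state is accepted with probability $\leq s$. This direction is essentially a direct reading of the spectral decomposition.

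For the converse $\fqma(c,w,s) \subseteq \afqma(c,w,s)$, I would invoke the Courant--Fischer (min--max) characterization of the eigenvalues $\lambda_1 \geq \cdots \geq \lambda_K$ of $\widetilde{\Pi}_x$. With $d = \dim W_x$, using $W_x$ (dimension $d$) as the test subspace in the max--min form gives $\lambda_d \geq \min_{\ket{\psi}\in W_x}\bra{\psi}\widetilde{\Pi}_x\ket{\psi} \geq c$, so at least $d$ eigenvalues are $\geq c$; using $W_x^\bot$ (dimension $K-d$) as the test subspace in the min--max form for $\lambda_{d+1}$ gives $\lambda_{d+1} \leq \max_{\ket{\phi}\in W_x^\bot}\bra{\phi}\widetilde{\Pi}_x\ket{\phi} \leq w < c$, so at most $d$ eigenvalues are $\geq c$. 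Hence exactly $d \in [q(|x|)]$ eigenvalues are $\geq c$, and since $\lambda_d \geq c$ while $\lambda_{d+1} \leq w$, no eigenvalue lies in $(w,c)$. On negative instances, $\bra{\psi}\widetilde{\Pi}_x\ket{\psi}\leq s$ for all $\ket{\psi}$ forces every eigenvalue $\leq s$.

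Each direction is short once the quadratic-form identity is in hand; the only genuine care needed is the bookkeeping between $\Pi_x$ on the full $(k+m)$-qubit space and its restriction $\widetilde{\Pi}_x$ (the harmless zero block), and the simultaneous use of \emph{both} Courant--Fischer forms — the max--min form to lower-bound $\lambda_d$ and the min--max form to upper-bound $\lambda_{d+1}$ — which is what pins the count of large eigenvalues down exactly rather than one-sidedly. The degenerate case $d = K$ (where $\lambda_{d+1}$ does not exist) I would dispatch separately, but it is immediate: then all eigenvalues are $\geq c$ and the interval $(w,c)$ is vacuously free of eigenvalues.
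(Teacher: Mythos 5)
Your proof is correct. The first inclusion ($\afqma(c,w,s) \subseteq \fqma(c,w,s)$) is essentially identical to the paper's: both reduce acceptance probability to the quadratic form $\bra{\Psi}\Pi_x\ket{\Psi}$, observe that the nonzero spectrum of $\Pi_x$ is supported on vectors of the form $\ket{v}\otimes\ket{0^m}$, and take $W_x$ to be the span of the eigenvectors with eigenvalue at least $c$. For the converse the two arguments diverge only in the tool used. The paper defines $W_c$ and $W_w$ as the spans of eigenvectors with eigenvalue $\geq c$ and $> w$ respectively, and proves $\dim(W_c)=\dim(W_x)=\dim(W_w)$ by contradiction: a dimension mismatch forces a nontrivial intersection such as $W_c\cap W_x^\bot$, and a unit vector there would be accepted with probability both $\geq c$ and $\leq w$. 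You instead apply the two forms of Courant--Fischer with test subspaces $W_x$ and $W_x^\bot$ to get $\lambda_d\geq c$ and $\lambda_{d+1}\leq w$ directly. These are really the same underlying fact --- the paper's intersection argument is the standard proof of the min--max inequalities specialized to this setting --- but your version is more compact: the conclusion that exactly $d$ eigenvalues are $\geq c$ and that none lies in $(w,c)$ falls out of the single chain $\lambda_d\geq c$, $\lambda_{d+1}\leq w$, whereas the paper needs the separate (if routine) step $W_c=W_w$ to exclude eigenvalues from the gap. Your explicit treatment of the zero block of $\Pi_x$ outside the image of $\Pi_{init}$ and of the degenerate case $d=K$ is also slightly more careful than the paper's, which passes over these points.
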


%\begin{proof} {\bf of Theorem 2}
\begin{proof}%{Theorem~\ref{thm:equivalence}}
%\newline
\noindent {\bf Part 1 (Definition~\ref{def:altfqma} $\Rightarrow$ Definition~\ref{def:fqma}):} 
Let $L$ be a promise problem $\afqma(c,w,s)$ %according to the Definition~\ref{def:altfqma}.  
with some verification procedure $\{V_x\}$ and polynomial $q$ according to Definition~\ref{def:altfqma}.
We show that $\{V_x\}$ and $q$ satisfy also Definition~\ref{def:fqma} with the same parameters.

We first consider the case $x \in L$. 
For every $\ket{u} \in \bit^{\otimes (k+m)}$, we have
\begin{equation} 
\label{eq:cent} 
\norm{\Pi_{acc}V_x \Pi_{init} \ket{u}}^2 %= \bra{u}(P_1V^*_x\Pi_1V_xP_1)\ket{u} 
= \bra{u} \Pi_x \ket{u} \enspace .
\end{equation}
It is easy to check that all eigenvectors of $\Pi_x$ are also eigenvectors of $ \Pi_{init}$. The $1$-eigenvectors of the projector $\Pi_{init}$ are of the form $\ket{u} \otimes \ket{0^m}$ with $\ket{u} \in \bit^{\otimes k}$, and any vector orthogonal to these is an eigenvector with eigenvalue 0. 
Let $r$ be the number of eigenvalues of $\Pi_x$ that are at least $c(|x|)$, by hypothesis $r \in [q(|x|)]$.
Let $\{\ket{v_i} \otimes \ket{0^m} : i \in [r]\}$ be a set of orthonormal eigenvectors of $\Pi_x$ with 
respective eigenvalues $\{\lambda_i \geq c(|x|): i \in [r]\}$, we set 
$W_x = \spn ( \{ \ket{v_i} : i \in [r] \})$.
Then all remaining $2^{k+m} - r$ eigenvalues of $\Pi_x$ are less than or equal to $w(|x|)$.
Let $\{\ket{v_i} \otimes \ket{0^m} : i \in \{r+1, \ldots, 2^{k+m} \}\}$ be a set of 
orthonormal eigenvectors with such eigenvalues. It is clear then that 
$W_x^{\bot} = \spn ( \{ \ket{v_i} : r < i \leq 2^{k+m} \} )$.
%of $\Pi_x$ with respective eigenvalues $\{\gamma_i \leq w(|x|) : i \in \{r+1, \ldots, N \} \}$. 

We consider a pure state 
$\ket{\psi} = \sum_{i \in [r]} \alpha_i \ket{v_i}$ in $W_x$. Then,
\begin{eqnarray*}
\norm{\Pi_{acc}V_x(\ket{\psi} \otimes \ket{0^m})}^2
& =&
 \norm{\Pi_{acc}V_x \Pi_{init}(\ket{\psi} \otimes \ket{0^m})}^2 \\
& = & ( \bra{\psi} \otimes \bra{0^m} ) (\Pi_x  \ket{\psi} \otimes \ket{0^m} ) \\
& =& (\bra{\psi} \otimes \bra{0^m}) ((\sum_{i \in [r]} 
\lambda_i  \alpha_i \cdot \ket{v_i}) \otimes \ket{0^m}) \\
& = & \sum_{i \in [r]} \abs{\alpha_i}^2 \cdot \lambda_i \; \geq \; c(|x|) \enspace .
\end{eqnarray*}
If $\ket{\phi} \in W_x^\bot$ is a pure state then by similar arguments  we get 
$\norm{\Pi_{acc}V_x(\ket{\phi} \otimes \ket{0^m})}^2 \leq w(|x|).$

When $x \notin L$, condition 2 of Definition~\ref{def:fqma} gets satisfied analogously 
from condition 2 of Definition~\ref{def:altfqma}.

\vspace{0.2in}

\noindent {\bf Part 2 (Definition~\ref{def:fqma} $\Rightarrow$ Definition~\ref{def:altfqma}):} Let 
$L \in \fqma(c,w,s)$ with some verification procedure $\{V_x\}$ and polynomial $q$ 
according to Definition~\ref{def:fqma}. We claim that $\{V_x\}$ and $q$ satisfy also Definition~\ref{def:altfqma}
with the same parameters.

First consider the case $x \in L$. The cardinality of the dimension of the subspace of witnesses 
$W_x$ in $\bit^{\otimes k}$ is in $[q(|x|)$
by hypothesis. We set
\begin{eqnarray*}
W_c & = & \spn\{\ket{v} \in \bit^{\otimes k} : \ket{v} \otimes \ket{0^m} 
\mbox{ is  an eigenvector }  
 \mbox{of $\Pi_x$ with eigenvalue } \geq \; c(|x|) \} \enspace \\ 
\mbox{and} && \\
W_w & = & \spn\{\ket{v} \in \bit^{\otimes k} : \ket{v} \otimes \ket{0^m} \mbox{ is  an eigenvector }  
 \mbox {of $\Pi_x$ with eigenvalue } > \; w(|x|) \} \enspace
\end{eqnarray*}
We will show that $\dim(W_x) = \dim(W_c)$ and that $W_c = W_w$, from  which the claim
follows. 
For this, it is sufficient to prove that
$\dim(W_c) = \dim(W_x) =  \dim(W_w)$, since clearly $W_c$ is a subspace of $W_w$.

First observe that the definitions of $W_c$ and $W_w$ imply that 
\begin{eqnarray*}
W_c^\bot & = & \spn\{\ket{v} \in \bit^{\otimes k} : \ket{v} \otimes \ket{0^m} \mbox{ is  an eigenvector } 
\mbox {of $\Pi_x$ with eigenvalue } < \; c(|x|) \} \enspace \\
\mbox{and} & & \\
W_w^\bot & = & \spn\{\ket{v} \in \bit^{\otimes k} : \ket{v} \otimes \ket{0^m} \mbox{ is  an eigenvector } 
 \mbox {of $\Pi_x$ with eigenvalue } \leq \; w(|x|) \} \enspace .
\end{eqnarray*}
Let us suppose that  $\dim(W_x) < \dim(W_c)$. Then there exists a vector $\ket{u}$  in 
$W_c \cap W_x^\bot$. Since $\ket{u} \in W_c$, using arguments as in Part 1 above, we have 
$\norm{\Pi_{acc}V_x(\ket{u} \otimes \ket{0^m})}^2 \geq c(|x|)$. 
However, since $\ket{u} \in W_x^\bot$,  from condition 1(b) of Definition~\ref{def:fqma} we have 
$\norm{\Pi_{acc}V_x(\ket{u} \otimes \ket{0^m})}^2 \leq w(|x|) < c(|x|)$ which is a contradiction. 
We similarly reach a contradiction assuming $\dim(W_x) > \dim(W_c)$ and hence $\dim(W_x) = \dim(W_c)$.

The equality $\dim(W_w) = \dim(W_x)$ can be proven by an argument analogous to the 
proof of $\dim(W_x) =  \dim(W_c)$.
 
In the case $x \notin L$,  assume for contradiction that there is an eigenvalue 
$\lambda > s(|x|)$ of $\Pi_x$ with eigenvector $\ket{v} \otimes \ket{0^m}$. Then as before,
\begin{eqnarray*}
\norm{\Pi_{acc}V_x(\ket{\psi} \otimes \ket{0^m})}^2
 & =&
\norm{\Pi_{acc}V_x \Pi_{init} (\ket{v} \otimes \ket{0^m})}^2 \\
& = &  (\bra{v} \otimes \bra{0^m}) \Pi_x (\ket{v} \otimes \ket{0^m}) \; = \; \lambda \; > \; s(|x|) \enspace ,
\end{eqnarray*}
which contradicts condition 2 of Definition~\ref{def:fqma}.
\end{proof}

The alternative definition of $\fqma(c,w,s)$ is useful in arriving at the following strong error probability reduction theorem whose proof follows very similar lines as the QMA strong error probability reduction proof in Marriott and Watrous~\cite{MW05} and hence is skipped.

\begin{Thm}[Error reduction] \label{thm:amplify} 
Let $c,w,s : \mathbb{N} \rightarrow [0,1]$ be polynomial-time computable functions such that for some polynomial $p$, for all $n$, they satisfy
$c(n) > \max\{w(n), s(n)\} + 1/p(n)$.  Let $r$ be any polynomial. Then
for any $L \in \fqma(c,w,s)$ having a verification procedure with
polynomials $k,m,q$, there exists a verification procedure for $L$
with parameters $(c',w',s') = (1-2^{-r},2^{-r}, 2^{-r})$ and
polynomials $k'=k$,  $m' = poly(m, r)$ and $q'=q$.
\end{Thm}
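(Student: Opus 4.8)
The plan is to reduce the statement to the eigenvalue formulation of $\fqma$ given by Definition~\ref{def:altfqma} and Theorem~\ref{thm:equivalence}, and then to amplify the spectral gap of the operator $\Pi_x$ by the ``in place'' technique of Marriott and Watrous, which crucially reuses a single copy of the witness rather than tensoring several copies. Passing to $\afqma(c,w,s)$ is exactly what makes this work: on yes-instances every eigenvalue of $\Pi_x$ is either at least $c$ or at most $w$ (none lie in the open gap $(w,c)$), with between $1$ and $q(|x|)$ of the former, while on no-instances all eigenvalues are at most $s$. Hence after amplification it suffices to produce a new verifier whose operator $\Pi_x'$ keeps the same number of large eigenvalues but pushes the large ones above $1-2^{-r}$ and all others below $2^{-r}$; Theorem~\ref{thm:equivalence} then translates this back into the subspace form of Definition~\ref{def:fqma} with $c'=1-2^{-r}$ and $w'=s'=2^{-r}$.

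The new verifier $V_x'$ leaves the $k$ witness qubits untouched, so $k'=k$, and allocates $m'=poly(m,r)$ auxiliary qubits to run the Marriott--Watrous procedure coherently. Starting from $\ket{\psi}\otimes\ket{0^{m'}}$, the procedure alternately measures the two projectors $\Pi_{init}$ and $V_x^\dag \Pi_{acc} V_x$ (the latter implemented by applying $V_x$, measuring $\Pi_{acc}$, and applying $V_x^\dag$) for $N=poly(p,r)$ rounds, recording the outcomes. By Jordan's lemma the pair $(\Pi_{init},\, V_x^\dag\Pi_{acc}V_x)$ decomposes $\bit^{\otimes(k+m)}$ into jointly invariant subspaces of dimension at most two; inside a two-dimensional block at angle $\theta$ the operator $\Pi_x=\Pi_{init}V_x^\dag\Pi_{acc}V_x\Pi_{init}$ acts as multiplication by $\cos^2\theta$, and the alternating measurements behave like a sequence of biased coin flips whose bias is governed by $\cos^2\theta$. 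A Chernoff bound then shows that the empirical fraction of accepting outcomes estimates the eigenvalue $\cos^2\theta$ to accuracy better than $1/(2p)$ except with probability $2^{-r}$, once $N$ is a large enough polynomial in $p$ and $r$. The verifier $V_x'$ accepts iff this estimate exceeds the threshold $\tau=(\max\{w,s\}+c)/2$, which lies strictly between $\max\{w,s\}$ and $c$ by the gap hypothesis.

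The decisive structural point, and the step I expect to be the main obstacle, is to argue that $\Pi_x'$ has the same eigenbasis on the witness register as $\Pi_x$ and that its spectrum is exactly the image of the spectrum of $\Pi_x$ under a monotone ``sharpening'' map $g_N$. Because the Jordan blocks are jointly invariant and the procedure disturbs the witness register only through these blocks, each eigenvector $\ket{v_i}$ of $\Pi_x$ with eigenvalue $\lambda_i$ remains an eigenvector of $\Pi_x'$, now with eigenvalue $g_N(\lambda_i)$ equal to the block-wise acceptance probability; the gap condition then forces $g_N(\lambda_i)\ge 1-2^{-r}$ when $\lambda_i\ge c$ and $g_N(\lambda_i)\le 2^{-r}$ when $\lambda_i\le\max\{w,s\}$. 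Since no eigenvalue of $\Pi_x$ lies in $(w,c)$, no new large eigenvalue is created and no old one is lost, so the number of eigenvalues of $\Pi_x'$ that are at least $1-2^{-r}$ is unchanged and we may take $q'=q$; on no-instances all eigenvalues likewise stay at most $2^{-r}$. Making this eigenvalue-wise action rigorous requires tracking the joint measurement statistics within each block and confirming that the coherent bookkeeping register never mixes distinct blocks — this is precisely the content of the Marriott--Watrous analysis, which is why the full computation can be cited from~\cite{MW05} rather than reproduced.
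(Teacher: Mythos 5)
Your proposal is correct and takes exactly the route the paper intends: the paper itself skips this proof, saying only that one should pass to the eigenvalue formulation $\afqma(c,w,s)$ via Theorem~\ref{thm:equivalence} and then follow the Marriott--Watrous in-place amplification~\cite{MW05}, which is precisely your argument. Your write-up merely fills in the details (Jordan blocks, the Chernoff estimate, and the block-wise eigenvalue map $g_N$ preserving the count of large eigenvalues so that $q'=q$) that the paper delegates to the citation.
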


%We discuss a third alternative definition for $\fqma$ in Appendix \ref{sec:third}.
%There is another potential definition of $\fqma$ which doesn't involve subspaces. On positive instances we require the existence of a polynomial size set of orthonormal vectors such that every vector in the set is accepted with probability at least $2/3$, while every vector in the orthogonal complement is accepted with probability at most $1/3$. We discuss the relation between this and the previous definitions in Appendix \ref{sec:third}.

\begin{Def}
A promise problem $L=(L_{yes}, L_{no})$ is in the complexity class 
{\em Unique Quantum Merlin-Arthur} $\uqma$ if $L$ is in $\fqma$ with the additional constraint that
for all $x \in L_{yes}$, the subspace $W_x$ of witnesses in the definition of $\fqma$ is one-dimensional.
%It is clear that  $\uqma \subseteq \fqma$. 
\end{Def}

\begin{Def} 
\label{def:uqmacp}
$\uqmacp$ is the promise problem  $ (\uqmacp_{yes}, \uqmacp_{no})$
where the elements of $\uqmacp_{yes} \cup \uqmacp_{no} $ are descriptions of quantum circuits $V$ with $k$ witness qubits and $m$ auxiliary qubits, such that
\begin{enumerate}
\item $V \in \uqmacp_{yes}$ if
\begin{enumerate} 
\item there exists a witness
$\ket{\psi}$ such that 
$\prob [V \mbox{ outputs } \accept \mbox{ on } \ket{\psi} ] \geq 2/3,$ 
\item for all witnesses
$\ket{\phi}$ % \in \bit^{\otimes k}$
orthogonal to $\ket{\psi}$,
$\prob [V \mbox{ outputs } \accept \mbox{ on } \ket{\phi} ] \leq 1/3,$
\end{enumerate}
\item
$V \in \uqmacp_{no}$ if for all pure states 
$\ket{\psi}$, $\prob [V \mbox{ outputs } \accept \mbox{ on } \ket{\psi}] \leq 1/3.$
\end{enumerate}
\end{Def}
It is easily verified that $\uqmacp$ is the canonical $\uqma$-complete promise problem.

%%%%%%%%%%%%%%%%%%%%%%%%%%%%%%%%%%%%%%%%%%%%%%%%%%%%%%%%%%
%%%%%%%%%%%%%%%%%%%%%%%%%%%%%%%%%%%%%%%%%%%%%%%%%%%%%%%%%%
%%%%%%%%%%%%%%%%%%%%%%%%THEOREM%%%%%%%%%%%%%%%%%%%%%%%%%%%
%%%%%%%%%%%%%%%%%%%%%%%%%%%%%%%%%%%%%%%%%%%%%%%%%%%%%%%%%%

\section{Reducing the dimension of the witness subspace}\label{Theorem}

This section will be entirely devoted to the proof of our main result.

\begin{Thm} {\bf (Main Theorem)} \label{main}
$\fqma \subseteq \p^{\uqma}.$
\end{Thm}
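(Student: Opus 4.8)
The plan is to combine the error-reduction theorem with the $\combinedtest$ sketched in the introduction, wrapped in a simple ``try every $t$'' Turing reduction. Starting from $L\in\fqma=\fqma(2/3,1/3,1/3)$, I would first invoke Theorem~\ref{thm:amplify} to replace the given verification procedure $\{V_x\}$ by one with parameters $(1-2^{-r},2^{-r},2^{-r})$, where $r=\lceil\log_2(3q(|x|))\rceil$ is chosen so that $(1-2^{-r})^{q(|x|)}\geq 2/3$ and $2^{-r}\leq 1/3$. Writing $\h=\bit^{\otimes k}$, this gives, on each positive instance $x$, a witness subspace $W_x$ of some dimension $d=d(x)\in[q(|x|)]$ on which every eigenvalue of $\Pi_x$ is $\geq 1-2^{-r}$ while on $W_x^\bot$ every eigenvalue is $\leq 2^{-r}$; on each negative instance every eigenvalue of $\Pi_x$ is $\leq 2^{-r}$.

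For each $t\in[q(|x|)]$ I would build a single $\uqmacp$ circuit $C_{x,t}$, the $\combinedtest$, acting on $t$ witness copies (that is, on $\hatt=\bit^{\otimes kt}$) together with auxiliary qubits: it first runs the $\alttest$, the efficient projective measurement with projector $\Pi_{\alt^{\hatt}}$ (implemented in analogy with the symmetric-subspace test of Barenco et al.\ and the $\swaptest$), and then runs the $\wittest$, which applies $V_x$ independently to each of the $t$ registers and accepts iff all $t$ runs accept; $C_{x,t}$ accepts iff both tests accept. Since the acceptance operator of the $\wittest$ on the witness registers is $M^{\otimes t}$, where $M$ is the single-register acceptance operator, i.e.\ $\bra{\psi}M\ket{\psi}=\norm{\Pi_{acc}V_x(\ket{\psi}\otimes\ket{0^m})}^2$, a short calculation using that $\Pi_{\alt^{\hatt}}$ is a projector shows that the acceptance probability of $C_{x,t}$ on a witness $\ket{\phi}$ equals $\bra{\phi}\Pi_{\alt^{\hatt}}\,M^{\otimes t}\,\Pi_{\alt^{\hatt}}\ket{\phi}$. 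The reduction, on input $x$, queries the $\uqma$ oracle on each $C_{x,t}$ for $t\in[q(|x|)]$ and outputs $\accept$ iff at least one query answers yes; this is clearly polynomial time.

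To prove completeness I would fix $x\in L_{yes}$ with $d=\dim(W_x)$ and analyse the query $t=d$, where $\alt^{\watd}=\spn\{\ket{W_{alt}}\}$ is one-dimensional by Claim~\ref{claim:unique}. On $\ket{W_{alt}}$ the $\alttest$ accepts with certainty (as $\ket{W_{alt}}\in\alt^{\hd}$) and, since $\ket{W_{alt}}\in\watd$ is a superposition of products of $W_x$-eigenvectors of $M$, the acceptance probability is $\bra{W_{alt}}M^{\otimes d}\ket{W_{alt}}\geq(1-2^{-r})^d\geq 2/3$. For any unit $\ket{\phi}$ orthogonal to $\ket{W_{alt}}$, i.e.\ $\ket{\phi}\in(\alt^{\watd})^\bot$, Claim~\ref{claim:projection} gives $\Pi_{\alt^{\hd}}\ket{\phi}\in(\watd)^\bot$; because every product eigenvector spanning $(\watd)^\bot$ has at least one factor in $W_x^\bot$, every eigenvalue of $M^{\otimes d}$ on $(\watd)^\bot$ is $\leq 2^{-r}$, so the acceptance probability $\bra{\phi}\Pi_{\alt^{\hd}}M^{\otimes d}\Pi_{\alt^{\hd}}\ket{\phi}\leq 2^{-r}\leq 1/3$. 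Thus $C_{x,d}\in\uqmacp_{yes}$, the oracle must answer yes, and the reduction outputs $\accept$.

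For soundness I would fix $x\in L_{no}$: every eigenvalue of $M$ is $\leq 2^{-r}$, hence every eigenvalue of $M^{\otimes t}$ is $\leq 2^{-rt}\leq 1/3$, so $C_{x,t}$ accepts every state with probability $\leq 1/3$ for all $t$ irrespective of the $\alttest$; every query is in $\uqmacp_{no}$, the oracle answers no on all of them, and the reduction outputs $\reject$. The one point needing care is that on a positive instance the queries with $t\neq d$ need not satisfy the $\uqma$ promise (for $t<d$ the accepted subspace $\alt^{\watt}$ has dimension $\binom{d}{t}>1$), so the oracle may answer arbitrarily there; this is harmless, because the reduction takes the OR over $t$ and the query $t=d$ already forces the answer yes, while on negative instances all queries are genuine no-instances. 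I expect the main work to lie in the clean implementation and error analysis of the $\alttest$ and in verifying the acceptance formula $\bra{\phi}\Pi_{\alt^{\hatt}}M^{\otimes t}\Pi_{\alt^{\hatt}}\ket{\phi}$; the conceptual crux---that the $\combinedtest$ effectively projects onto $\alt^{\watt}$, which collapses to the single state $\ket{W_{alt}}$ exactly when $t=d$---is already supplied by Claims~\ref{claim:commute} and~\ref{claim:projection}.
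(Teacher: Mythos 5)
Your proposal is correct and follows essentially the same route as the paper: amplify via Theorem~\ref{thm:amplify}, loop over $t\in[q(|x|)]$, query the $\uqma$ oracle on the circuit that runs the $\alttest$ followed by the $\wittest$, and use Claims~\ref{claim:unique} and~\ref{claim:projection} to show that the $t=d$ query is a genuine yes-instance with unique witness $\ket{W_{alt}}$ while all queries on negative instances are genuine no-instances. The only (harmless) differences are cosmetic: you bound the rejection probabilities by a spectral argument on $M^{\otimes t}$ where the paper uses a union bound and Cauchy--Schwarz, and you make explicit the observation—left implicit in the paper—that promise-violating queries at $t\neq d$ on positive instances cannot hurt an OR-type reduction.
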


\begin{proof} 
%By Theorem~\ref{thm:amplify} it suffices to show that there exists a polynomial $r$ such that a promise problem $L \in  \fqma(1 - 2^{-r}, 2^{-r}, 2^{-r})$ having a verification procedure $\{ V_x : x \in \{0,1\}^*\}$ with polynomials $k,m,q$, is contained in $\p^{\uqma}$. We will let $r$ to be any fixed polynomial such that for all $n \in \mathbb{N}: \; r(n)  > q(n) + 2 $.

Let $L \in \fqma$ have a verification procedure $\{V'_x : x \in \{0,1\}^*\}$ with polynomials $k,m',q$. Let $r$ be a polynomial such that $q 2^{-r} \leq 1/3$. Then, we know from Theorem \ref{thm:amplify}
that $L \in \fqma(1 - 2^{-r}, 2^{-r}, 2^{-r})$ with verification procedure $\{ V_x : x \in \{0,1\}^*$ and polynomials $k,m,q$.

Our goal is to describe a deterministic polynomial-time algorithm $\mcA$, with access to the oracle $\mcO$ for the promise problem $\uqmacp$, that decides the promise problem $L$. 
In high level, our algorithm works in the following way. On input $x$ and for all $t \in [q(|x|)]$, $\mcA$ calls $\mcO$ with a quantum circuit $A_x^t$ that uses $t \cdot k$ witness qubits and $t \cdot m$ auxiliary qubits, and outputs $\accept$ if and only if the witness has the following two properties: first, it belongs to the alternating subspace of $\h^{\otimes t}$ and second the circuit $V_x$, when performed on each of the $t$ registers separately, outputs $\accept$ on all of them. $\mcA$ accepts iff for any $t$, oracle $\mcO$ accepts. We will prove that for $x\in L_{yes}$, we have $A_x^d \in \uqmacp_{yes}$, where $d=\dim(W_x)$. Hence $\mcO$ accepts $A_x^t$ and therefore $\mcA$ accepts. On the other hand, for $x\in L_{no}$, we show that for all $t \in [q(|x|)], A_x^t \in \uqmacp_{no}$ and hence $\mcA$ rejects.   

We first describe in detail the $\alttest$ and the $\wittest$ that appear in the algorithm. In our descriptions below $k,m,q,r$ represent the integers $k(|x|), m(|x|),q(|x|)$ and $r(|x|)$ respectively.

%%%%%%%%%%%%%%%%%%%%%%%%%%%%%%%%%%%%%%%%%%%%%%%%%%%%%%%%%%
%%%%%%%%%%%%%%%%%%%%%%%%%%%%%%%%%%%%%%%%%%%%%%%%%%%%%%%%%%
%%%%%%%%%%%%%%%%%%%%%ALTERNATING TEST%%%%%%%%%%%%%%%%%%%%%
%%%%%%%%%%%%%%%%%%%%%%%%%%%%%%%%%%%%%%%%%%%%%%%%%%%%%%%%%%

\subsection{Alternating Test}

%In this subsection we present the $\alttest$ and prove some properties that will be needed for the proof of correctness of our final algorithm.  
Let $\h$ be the Hilbert space $\bit^{\otimes k}$ and let $t \in [2^k]$. Let us fix some polynomial-time computable bijection between the set $[t!]$ and the set of permutations $S_t$. Let $\mathcal{P}_t$ be the $(t!)-$dimensional Hilbert space spanned by vectors $\ket{i}$, for $i \in [t!]$. We will use the elements of $S_t$ for describing the above basis vectors via the fixed bijection.

The $\alttest$ with parameter $t$ receives, as input, a pure state in $\hatt$ and performs a unitary operation in the Hilbert space $\mathcal{P}_t \otimes \hatt$, followed by a measurement. We will refer to the elements of $\mathcal{P}_t \otimes \hatt$ as consisting of two registers $R$ and $S$, 
where the content of each register is a mixed state with support over the corresponding Hilbert space.

Let us define $\ket{\mbox{perm}_t} = \frac{1}{\sqrt{t!}}\sum_{\pi\in S_t} \sgn(\pi)\ket{\pi}$.

%\begin{figure}[h]
%\label{fig:alttest}
\begin{center}
\fbox{
\begin{minipage}[l1pt]{6.00in}
{\bf  $\alttest(t)$} \\ 
Input: A pure state $\ket{\psi} \in \hatt$ in the $(t \cdot k)$-qubit register $S$\\
Output: The content of $S$ and  $\accept$ or  $\reject$.
\begin{enumerate}
	\item Create the state $(\frac{1}{\sqrt{t!}}\sum_{\pi\in S_t} \ket{\pi}) \otimes \ket{\psi}$. 
	\item Apply the unitary $U: \ket{\pi}\otimes \ket{\psi} \rightarrow \ket{\pi} \otimes U_\pi\ket{\psi}$.
	\item Perform the measurement $(M,I-M)$, where $ M=\ketbra{\mbox{perm}_t} \otimes I_\hatt$. Output the content of $S$. Output $\accept$ if the state has been projected onto $M$ and output $\reject$ otherwise.		
\end{enumerate}
\end{minipage}
}
\end{center}
%\caption{$\alttest$}
%\end{figure}

It is easily verified that the $\alttest(t)$ runs in time polynomial in $t \cdot  k$. Since we will only call it with $t \in [q]$, its running time will be polynomial in $|x|$. The following  lemma states that the $\alttest(t)$ 
is a projection onto the subspace $\alt^\hatt$.

\begin{Lem}
\label{lem:alttest} 
\begin{enumerate}
\item For any pure state $\ket{\psi} \in \alt^\hatt$, the $\alttest(t)$ outputs the state $\ket{\psi}$ and $\accept$ with probability 1.	
%Let $\ket{\theta} \in \alt_\hd$ be a pure state, then $\ket{\theta}$ passes $\alttest$ with probability $1$. 
\item 
For any $\ket{\phi} \in (\alt^\hatt)^\bot$, the $\alttest(t)$ outputs $\accept$ with probability 0.
%Let $\ket{\phi} \in \alt_\hd^\bot $ be a pure state, then $\ket{\phi}$ passes $\alttest$ with probability $0$.
\end{enumerate}
\end{Lem}

\begin{proof}
{ \bf Part 1: } 
%Let $\pi = \pi_r \circ \ldots \circ \pi_1 $ be a sequence of $r$ transpositions.  
%Let $\ket{\theta} \in \alt_\hd$. Then,
Since  $\ket{\psi} \in \alt^\hatt$ we have 
$U_\pi \ket{\psi} = 
%U_{\pi_r}  \cdots  U_{\pi_1} \ket{\psi}  = (-1)^r \cdot \ket{\psi} = 
\sgn(\pi) \cdot \ket{\psi},$
and therefore the state after Step 2 is
\begin{eqnarray*}
\frac{1}{\sqrt{t!}}\sum_{\pi\in S_t} \ket{\pi} \otimes U_\pi \ket{\psi}  
=  \frac{1}{\sqrt{t!}}\sum_{\pi\in S_t} \sgn(\pi) \cdot \ket{\pi} \otimes \ket{\psi} 
 =  \ket{\mbox{perm}_t} \otimes \ket{\psi}.
\end{eqnarray*}
%Thus the probability that the $\alttest$ outputs $\accept$ is 1 and the output quantum state is $\ket{\theta}$.

%\vspace{0.1in}
\noindent {\bf Part 2:} By Claim ~\ref{claim:Tdecompij}, we have $(\alt^{\hatt})^\bot  = \sum_{i \neq j} \sym^\hatt_{ij}$.
Hence it is enough to show that for all distinct $i,j \in [t]$ and for any vector $\ket{\phi} \in \sym^\hatt_{ij}$, the probability $p$ that the $\alttest(t)$ outputs $\accept$ is $0$. 
%At the end of Step 2 the state is $\frac{1}{\sqrt{d!}}\sum_{\pi\in S_d} \ket{\pi} \otimes U_\pi \ket{\theta}$
We have
\begin{eqnarray*}
p & = &  
%( \frac{1}{\sqrt{d!}}\sum_{\pi_1 \in S_d}   \bra{\pi_1} \otimes \bra{\theta}  U^*_{\pi_1} ) M ( \frac{1}{\sqrt{d!}}\sum_{\pi\in S_d} \ket{\pi} \otimes U_\pi \ket{\theta} )} \\
%& = & 
( \frac{1}{\sqrt{t!}}\sum_{\sigma \in S_t}   \bra{\sigma}  \otimes \bra{\phi}  U^\dagger_{\sigma} ) ( \ketbra{\mbox{perm}_t} \otimes I_\hatt ) 
 ( \frac{1}{\sqrt{t!}}\sum_{\pi\in S_t} \ket{\pi} \otimes U_\pi \ket{\phi} ) \\
& = & (\frac{1}{t!})^2 \sum_{\sigma, \pi  \in S_t} \sgn(\sigma) \cdot \sgn(\pi) \cdot  \bra{\phi}  U^\dagger_{\sigma} U_\pi \ket{\phi} 
\end{eqnarray*}
We define $\pi' =  \pi \circ \pi_{ij}$. 
%\footnote{Recall that $\pi_{ij}$ is the permutation swapping $i$ and $j$.} 
Then $\pi = \pi' \circ \pi_{ij}^{-1} = \pi' \circ \pi_{ij}$ and $\sgn(\pi) = -\sgn(\pi')$. Since $\ket{\phi} \in \sym^\hatt_{ij}$, we have $U_{\pi_{ij}} \ket{\phi} = \ket{\phi}$ and hence 
$U_{\pi' \circ \pi_{ij}} \ket{\phi} = U_{\pi'} \cdot (U_{\pi_{ij}} \ket{\phi}) = U_{\pi'} \ket{\phi}$. 
Therefore,
\begin{eqnarray*}
p
& = & (\frac{1}{t!})^2 \sum_{\sigma, \pi  \in S_t} \sgn(\sigma) \cdot \sgn(\pi) \cdot  \bra{\phi}  U^\dagger_{\sigma} U_\pi \ket{\phi} \\ 
& = & \frac{-1}{(t!)^2} \sum_{\sigma, \pi'  \in S_t} \sgn(\sigma) \cdot \sgn(\pi') \cdot  \bra{\phi}  U^\dagger_{\sigma} U_{\pi' \circ \pi_{ij}} \ket{\phi} \\
& = & \frac{-1}{(t!)^2} \sum_{\sigma, \pi'  \in S_t} \sgn(\sigma) \cdot \sgn(\pi') \cdot  \bra{\phi}  U^\dagger_{\sigma} U_{\pi'} \ket{\phi} \\
& = &  - p  
%\quad \mbox{(from (\ref{eq:2}))} \\
\end{eqnarray*}
Hence $p=0$.
\end{proof}

\suppress{
\begin{proof}
{ \bf Part 1: } On input $\ket{T_{sign}}$, Step 1 of $\alttest$ creates the state: 
\begin{align*}
	& (\frac{1}{\sqrt{d!}}\sum_{\pi\in S_d} \ket{\pi}) \otimes (\frac{1}{\sqrt{d!}}\sum_{\pi'\in S_d} sign(\pi') \cdot U_{\pi'}\ket{\psi_1} \ldots \ket{\psi_d} ) \\
	= \ & \frac{1}{d!} \sum_{\pi, \pi'} sign(\pi') \cdot \ket{\pi} \otimes U_{\pi'}\ket{\psi_1} \ldots \ket{\psi_d} 
	\end{align*}
For permutation $\pi_1, \pi_2$ let $\pi_1 \circ \pi_2$ represent the composition of the two. Now Step 2 changes it to 
\begin{align*}
	& \frac{1}{d!} \sum_{\pi, \pi'} sign(\pi') \cdot \ket{\pi} \otimes U_{\pi\circ \pi'} \ket{\psi_1} \ldots \ket{\psi_d} \\
	= \ & \frac{1}{d!} \sum_{\pi, \pi''} sign(\pi^{-1} \circ \pi'') \cdot \ket{\pi} \otimes U_{\pi''} \ket{\psi_1} \ldots \ket{\psi_d} \\
	= \ & \frac{1}{d!} \sum_{\pi, \pi''} sign(\pi^{-1}) \cdot sign(\pi'')  \cdot \ket{\pi} \otimes U_{\pi''} \ket{\psi_1} \ldots \ket{\psi_d} \\
	= \ & (\frac{1}{\sqrt{d!}} \sum_{\pi} sign(\pi^{-1}) \cdot \ket{\pi} ) \otimes (\frac{1}{\sqrt{d!}} \sum_{\pi''} sign(\pi'') \cdot U_{\pi''} \ket{\psi_1} \ldots \ket{\psi_d} ) \\
	= \ & \ket{sign} \otimes \ket{T_{sign}} \qquad \mbox{(since $sign(\pi^{-1}) = sign(\pi)$)} \enspace .
\end{align*}
Thus the state above projects onto $M$ with probability $1$.

%\vspace{0.1in}
\noindent {\bf Part 2:} Since $\ket{\phi} \in \spn\{T_{sign}^\bot\}$, using Lemma~\ref{lem:unique}, Claim~\ref{claim:alt} and Fact~\ref{fact:stand} we have,
$$ \ket{\phi} \in  (\cap_{i\neq j} \alt_{ij} \cap \td )^\bot = \oplus_{i \neq j} \sym_{ij} \oplus \td^\bot \enspace .$$
Since $\ket{\phi} \in \td$ we have $\ket{\phi} \in   \oplus_{i \neq j} \sym_{ij} $. Hence it is enough to show that that for all $i\neq j$, any vector $\ket{\theta} \in \sym_{ij}$ gets rejected with probability $1$. On input $\ket{\theta} \in \sym_{ij}$, at end of Step 2 we have
\begin{align*}
	& \frac{1}{\sqrt{d!}}\sum_{\pi\in S_d} \ket{\pi} \otimes U_\pi \ket{\theta}
\end{align*}
Hence the probability to output $\accept$ is

\begin{eqnarray*}
\lefteqn{  a \defeq ( \frac{1}{\sqrt{d!}}\sum_{\pi_1 \in S_d}   \bra{\pi_1} \otimes \bra{\theta}  U^*_{\pi_1} ) M ( \frac{1}{\sqrt{d!}}\sum_{\pi\in S_d} \ket{\pi} \otimes U_\pi \ket{\theta} )} \\
& = & ( \frac{1}{\sqrt{d!}}\sum_{\pi_1 \in S_d}   \bra{\pi_1}  \otimes \bra{\theta}  U^*_{\pi_1} ) ( \ketbra{sign} \otimes I_\hd ) ( \frac{1}{\sqrt{d!}}\sum_{\pi\in S_d} \ket{\pi} \otimes U_\pi \ket{\theta}) \\
& = & \frac{1}{d!} \sum_{\pi_1, \pi  \in S_d} sign(\pi_1) \cdot sign(\pi) \cdot  \bra{\theta}  U^*_{\pi_1} U_\pi \ket{\theta} 
\end{eqnarray*}
Let $\pi' \defeq  \pi \circ \pi_{ij}$. \footnote{Recall that $\pi_{ij}$ is the permutation swapping $i$ and $j$.} Then $\pi = \pi' \circ \pi_{ij}^{-1} = \pi' \circ \pi_{ij}$ and hence $sign(\pi) = -1 \cdot sign(\pi')$. Since $\ket{\theta} \in \sym_{ij}$, we have $U_{\pi_{ij}} \ket{\theta} = \ket{\theta}$ and hence, 
\begin{equation} \label{eq:2} 
U_{\pi' \circ \pi_{ij}} \ket{\theta} = U_{\pi'} \cdot (U_{\pi_{ij}} \ket{\theta}) = U_{\pi'} \ket{\theta} \enspace . 
\end{equation}
Hence continuing we have,
\begin{eqnarray*}
a & = & \frac{1}{d!} \sum_{\pi_1, \pi  \in S_d} sign(\pi_1) \cdot sign(\pi) \cdot  \bra{\theta}  U^*_{\pi_1} U_\pi \ket{\theta} \\
& = & (-1) \cdot \frac{1}{d!} \sum_{\pi_1, \pi'  \in S_d} sign(\pi_1) \cdot sign(\pi') \cdot  \bra{\theta}  U^*_{\pi_1} U_{\pi' \circ \pi_{ij}} \ket{\theta} \\
& = & (-1) \cdot \frac{1}{d!} \sum_{\pi_1, \pi'  \in S_d} sign(\pi_1) \cdot sign(\pi') \cdot  \bra{\theta}  U^*_{\pi_1} U_{\pi'} \ket{\theta} \quad \mbox{(from (\ref{eq:2}))} \\
& = & -a \enspace .
\end{eqnarray*}
Hence $a = 0$.
\end{proof}
}

%%%%%%%%%%%%%%%%%%%%%%%%%%%%%%%%%%%%%%%%%%%%%%%%%%%%%%%%%%%%%
%%%%%%%%%%%%%%%%%%%%%%%%%%%%%%%%%%%%%%%%%%%%%%%%%%%%%%%%%%%%%
%%%%%%%%%%%%%%%%%%%%WITNESS TEST%%%%%%%%%%%%%%%%%%%%%%%%%%%%%
%%%%%%%%%%%%%%%%%%%%%%%%%%%%%%%%%%%%%%%%%%%%%%%%%%%%%%%%%%%%%

\subsection{Witness Test}
The $\wittest$ with parameter $t\in [q]$ receives as input a pure state in $\hatt$ and performs a unitary operation in the Hilbert space $\hatt \otimes \bit^{\otimes (t m)} $ followed by a measurement. We will  refer to the elements of $\hatt \otimes \bit^{\otimes (t m)}$ as consisting of $t$ pairs of registers $(T_i,Z_i)$ respectively on $k$ and $m$ qubits, for $i \in [t]$. All registers $Z_i$ will be initialized to $\ket{0^{m}}$. 
%The $\wittest$ appears in Figure~\ref{fig:wittest}.

%\begin{figure}[h]
%\label{fig:wittest}
\begin{center}
\fbox{
\begin{minipage}[l]{6.00in}
{\bf $\wittest(t)$}\\
Input: A pure state $\ket{\psi} \in \hatt$  in the $k$-qubit registers $T_i$, for $i \in [t]$\\ 
Output: $\accept$ or $\reject$
\begin{enumerate}
\item For all $i \in [t]$, append a register $Z_i$ initialized to $\ket{0^m}$ and apply the circuit $V_x$ on registers $(T_i,Z_i)$. 
\item Output $\accept$ if for all $i \in [t]$, $V_x$ outputs $\accept$; otherwise output $\reject$.	
\end{enumerate}
\end{minipage}
}
\end{center}
%\caption{$\wittest$}
%\end{figure}
We can describe the $\wittest(t)$ as the operator $(\Pi_{acc}V_x)^{\otimes t}$ acting on a state $\ket{\psi} \otimes \ket{0^{t m}}$. Hence,
$\prob[\wittest(t) \mbox{ outputs }\accept \mbox{ on } \ket{\psi}] =  || (\Pi_{acc}V_x)^{\otimes t} (\ket{\psi} \otimes \ket{0^{t m}})||^2 $.
Note that the description of the circuit $V_x^{\otimes t}$ can be generated in polynomial-time, since the circuit family $\{V_x, x \in \{0,1\}^*\}$ is uniformly generated  in polynomial-time. 
%family of polynomial size quantum circuits. 

In what follows we will have to argue about the probability that the verification procedure $V_x$ outputs $\accept$ when its input is some mixed state. Even though we have only considered pure states as inputs in the definition of the class $\fqma$, we will see that it is not hard to extend our arguments to mixed states.

%The following lemma describes the acceptance probabilities of the $\wittest$ in various cases. 
\begin{Lem}\label{lem:wittest}
\begin{enumerate}
\item If $x\in L_{yes}$, then for every $\ket{\psi} \in W_x^{\otimes t}$, 
%such that the support of $\rho_i$ is over $W_x$, for all $i \in [t]$, 
the $\wittest(t)$ outputs $\accept$ with probability at least $2/3$.
\item If $x\in L_{yes}$, then for every $\ket{\phi} \in (W_x^{\otimes t})^\bot$, the $\wittest(t)$ outputs $\accept$ with probability at most $1/3$.
\item If $x\in L_{no}$, then for every $\ket{\psi} \in \hatt$, the $\wittest(t)$ outputs $\accept$ with probability at most $1/3$.
\end{enumerate}
\end{Lem}

%\begin{proof}

%\vspace{0.05in}
\begin{proof}
 {\bf Part 1: } 
By completeness we know that for any pure state $\ket{\psi'} \in W_x$, we have \newline $\prob[V_x \mbox{ outputs } \reject \mbox{ on } \ket{\psi'}] \leq 2^{-r}$. Let $\rho_i$ denote the reduced density matrix of $\ket{\psi}$ on register $T_i$. Since $\ket{\psi} \in \wattx$, then for every $i \in [t]$, the density matrix $\rho_i$ is a distribution of pure states that all belong to $W_x$ and hence
$
\prob[V_x \mbox{ outputs } \reject \mbox{ on } \rho_i] \leq 2^{-r}.
$
It follows from the union bound that
\begin{eqnarray*} 
\prob[\wittest(t) \mbox{ outputs }\accept \mbox{ on } \ket{\psi}]
& \geq & 1 - \sum_{i=1}^t \prob[V_x \mbox{ outputs } \reject \mbox{ on } \rho_i] \\  
& \geq & 1-t\cdot 2^{-r} \geq 2/3,
\end{eqnarray*}
where the last inequality follows from the choice of $r$.

%since $t \leq q(|x|)$.
\vspace{0.1in}

\noindent{ \bf Part 2:} For $i \in [t]$, let 
$S_i = W_x \otimes \ldots \otimes W_x \otimes  W_x^{\bot} \otimes \h \ldots \otimes \h$,
where $W_x^{\bot} $ stands in the $i^{\mbox{\tiny th}}$ component of the tensor product.
By Fact~\ref{fact:stand} we have $(\wattx)^\bot = \bigoplus_{i\in [t]} S_i$.

Let us therefore consider a pure state $\ket{\phi} \in \bigoplus_{i\in [t]} S_i$ and let 
$\ket{\phi} = \sum_{i\in [t]} a_i \ket{\phi_i}$, where 
$\ket{\phi_i}$ is a pure state in $S_i$. 
Then $\sum_{i=1}^t |a_i|^2 = 1$ because the $\ket{\phi_i}$'s are also orthogonal. 
Furthermore, let $\rho_{i}$ be the reduced density matrix of $\ket{\phi_i}$ on register $T_i$. 
Then the support of $\rho_{i}$ is over $W_x^\bot$. Since for any pure state 
$\ket{\phi'} \in W_x^\bot$ we have 
$\prob[V_x \mbox{ outputs } \accept \mbox{ on } \ket{\phi'}] \leq 2^{-r}$,
we conclude that $ \prob[V_x \mbox{ outputs } \accept \mbox{ on } \rho_{i}] \leq 2^{-r}$.

%Let $\rho_{s,i}$ the reduced density matrix of $\ket{\phi_s}$ on register $T_i$. Then, for every $s \in S$ there exists $i \in [t]$ such that the support of $\rho_{s,i}$ is over $(W)^\bot$. Since for any pure state $\ket{\phi'} \in (W)^\bot$ we have that $\prob[V_x \mbox{ outputs } \accept \mbox{ on } \ket{\phi'}] \leq 2^{-r}$ we conclude that for every $s \in S$ there exists $i \in [t]$ such that $ \prob[V_x \mbox{ outputs } \accept \mbox{ on } \rho_{s,i}] \leq 2^{-r}$. 

The probability that the $\wittest(t)$ outputs $\accept$ on $\ket{\phi_i}$ is equal to the probability that all $t$ applications of $V_x$ output $\accept$, which is less than the probability that the $i^{\mbox{th}}$ application of $V_x$ outputs $\accept$, since the projections, performed in different registers, commute. Hence,
%The probability the $\wittest(t)$ outputs $\accept$ on $\ket{\phi_s}$ is equal to the probability that all $t$ applications of $V_x$ output $\accept$, which is less than the probability that the $i$-th application of $V_x$ outputs $\accept$, since the projections, performed in different registers, commute. Hence,
\begin{eqnarray*}
\prob[\wittest(t) \mbox{ outputs }\accept \mbox{ on } \ket{\phi_i}]
& \leq & \prob[V_x \mbox{ outputs } \accept \mbox{ on } \rho_{i}]
\quad \leq \quad 2^{-r}.
\end{eqnarray*}
Now for the input $\ket{\phi}$ we have
\begin{eqnarray*}
\prob[\wittest(t) \mbox{ outputs }\accept \mbox{ on } \ket{\phi}]
&  =  &
|| (\Pi_{acc}V_x)^{\otimes t} (\ket{\phi} \otimes \ket{0^{t  m}})||^2 \\  
%\;\; = \;\; || (\Pi_{acc}V_x)^{\otimes t} (\sum_{i=1}^t a_i \ket{\phi_i} \otimes \ket{0^{t  m}})||^2 \\
& = & || \sum_{i\in [t]} a_i (\Pi_{acc}V_x)^{\otimes t} ( \ket{\phi_i} \otimes \ket{0^{t  m}})||^2 \\  
& \leq &  ( \sum_{i\in [t]} |a_i| \cdot||(\Pi_{acc}V_x)^{\otimes t} ( \ket{\phi_i} \otimes \ket{0^{t m}})|| )^2 \\
& \leq & 
 ( \sum_{i\in [t]} |a_i|^2 ) \cdot ( \sum_{i\in [t]} ||(\Pi_{acc}V_x)^{\otimes t} ( \ket{\phi_i} \otimes \ket{0^{t  m}})||^2 ) \\
 & \leq &  t\cdot 2^{-r} \;\; \leq \;\; 1/3 \enspace .
\end{eqnarray*}
In the above calculation the first two inequalities follow respectively from  the triangle inequality and the Cauchy-Schwarz inequality.

\vspace{0.1in} 

\noindent
{ \bf Part 3: }  
By the soundness of the original protocol we know that for any pure state $\ket{\psi'} \in \h$ 
$\prob[V_x \mbox{ outputs } \accept \mbox{ on } \ket{\psi'}] \leq 2^{-r}$. The same holds for any mixed state as well. Since the probability that the $\wittest(t)$ outputs $\accept$ is at most the probability that the procedure $V_x$ accepts the state on the first register $T_1$ we conclude that 
\begin{eqnarray*}
\prob[\wittest(t) \mbox{ outputs }\accept \mbox{ on } \ket{\psi}]
 \leq  \prob[V_x \mbox{ outputs } \accept \mbox{ on } \rho_1] 
 \leq  2^{-r} \quad \leq \quad 1/3.  \;\;\;\;\;\;\;\;\quad\quad\quad\quad\quad\quad\quad\quad\quad\quad\quad \Box 
\end{eqnarray*}
\end{proof}

%%%%%%%%%%%%%%%%%%%%%%%%%%%%%%%%%%%%%%%%%%%%%%%%%%%%%%%%%%%%%%%5
%%%%%%%%%%%%%%%%%%%%%%%%%%%%%%%%%%%%%%%%%%%%%%%%%%%%%%%%%%%%%%%5
%%%%%%%%%%%%%%%%%%%%%%%FINAL ALGORITHM%%%%%%%%%%%%%%%%%%%%%%%%%5
%%%%%%%%%%%%%%%%%%%%%%%%%%%%%%%%%%%%%%%%%%%%%%%%%%%%%%%%%%%%%%%5

\subsection{Putting it all together}

Finally we describe the algorithm $\mcA$ in the figure below and proceed to analyze its properties.\\

\noindent {\bf Running time:} 
%We show that the algorithm $\mcA$ runs in time polynomial in $|x|$. 
We have seen that the description of the circuit that performs the $\alttest(t)$ can be generated in time polynomial in $t \cdot k$ which is polynomial in $|x|$. The description of the circuit that performs the $\wittest(t)$ can also be generated in polynomial-time, since the circuit family $\{V_x: x \in \{0,1\}^*\}$ can be generated uniformly in polynomial-time. Hence the description of the circuit $A_x^t$ can be generated in polynomial-time and the overall algorithm $\mcA$ runs in polynomial-time.

%\begin{figure}[ht]
%\label{fig:algorithm}
\begin{center}
\fbox{
\begin{minipage}[l]{6.00in}
\begin{center}Algorithm $\mcA$\end{center}
 Input: $x \in L_{yes} \cup L_{no}$\\ 
Output: $\accept$ or $\reject$

%\vspace{0.1in}

\begin{enumerate}
	\item For $t = 1, \ldots, q(|x|)$ do:
		\begin{enumerate}
			\item Call the oracle $\mcO$ with input $A_x^t$, where $A_x^t$ is the description of the circuit of the following procedure on $t\cdot k$ witness qubits and $t \cdot m$ auxiliary qubits

\begin{center}
\begin{minipage}[l]{4.00in}
Input: A pure state $\ket{\psi} \in \hatt \;\; ;\;\;$
Output: $\accept$ or $\reject$
\begin{enumerate}
	\item Run the $\alttest (t)$ with input $\ket{\psi}$. 
	\item Run the $\wittest (t)$ with input being the output state of the $\alttest(t)$. 
	\item Output $\accept$ iff both Tests output $\accept$.			
\end{enumerate}
\end{minipage}
\end{center}
%\vspace{0.1in}
			\item If $\mcO$ outputs $\accept$ then output $\accept$ and halt.
		\end{enumerate}
\item Output $\reject$.
\end{enumerate}
\end{minipage}
}
\end{center}
%\caption{Algorithm $\mcA$}
%\end{figure}

\COMMENT{
\begin{figure}[ht]
\label{fig:algorithm}
\begin{center}
\fbox{
\begin{minipage}[l]{5.50in}
 
Input: $x \in L_{yes} \cup L_{no}$. Output $ \in \{\accept, \reject\}$.

%\vspace{0.1in}

Let\footnote{Recall that $V_x$ have $k(|x|)$ input qubits and $m(|x|)$ axillary qubits.} $\h_1 \defeq \cc^{\otimes k(|x|)}$ and $\h_2 \defeq \cc^{\otimes m(|x|)}$. 
\begin{enumerate}
	\item For $d = 1 \ldots q(|x|)$ do:
		\begin{enumerate}
			\item Let ${\tilde V}^d_{\alt}$ be the circuit of the $\alttest$ for the Hilbert space $\h_1^{\otimes d}$. Let $V^d_{\alt}$ be the circuit which acts like ${\tilde V}^d_{\alt}$ on $\h_1^{\otimes d}$ and identity on $\h_2^{\otimes d}$. Generate description of the circuit $V^d_{\alt}$. 
			\item Let $V_x^d$ represent $d$ parallel copies of the circuit $V_x$ such that $V_x^d$ outputs $\accept$ iff all $d$-copies of $V_x$ output $\accept$. Generate description of the circuit $V^d_x$. 
			\item Let $W^d_x$ be the composition of $V^d_{\alt}$ followed by $V_x^d$ such that $W^d_x$ output $\accept$ iff $V^d_{\alt}$ output $\accept$ followed by $V_x^d$ output $\accept$. Generate description of circuit $W_x^d$. 
			\item Call oracle $\mcO$ with input $(W_x^d, d \cdot k(|x|))$. 
			\item If $\mcO$ output $\accept$ then output $\accept$ and halt. If $\mcO$ output $\reject$ then set $d = d +1$.  
		\end{enumerate}
\item Output $\reject$ and halt.
\end{enumerate}
\end{minipage}
}
\end{center}
\caption{Algorithm $\mcA$}
\end{figure}
}

%\noindent {\bf Analysis of algorithm $\mcA$: } 
\vspace{0.05in}

\noindent{\bf Correctness in  case $x \in L_{yes}$:} Let us consider the oracle call with input $A_x^d$ where $d$ is the dimension of $W_x$. We prove that $ A_x^d \in \uqmacp_{yes}$, hence the oracle $\mcO$ outputs $\accept$  and therefore $\mcA$ outputs $\accept$ as well. Our claim is immediate from the following lemma.

\begin{Lem}
\begin{enumerate}
\item $\Pr[A_x^d \mbox{ outputs } \accept \mbox{ on } \ket{W_{alt}}] \geq 2/3$. 
\item Let $\ket{\phi} \in \hd$ be orthogonal to $\ket{W_{alt}}$. Then $\Pr[A_x^d \mbox{ outputs } \accept \mbox{ on } \ket{\phi}] \leq 1/3$.
\end{enumerate}
\end{Lem}

\begin{proof}
{\bf Part (1): } 
Since $\ket{W_{alt}}\in \alt^{\hd}$, Lemma \ref{lem:alttest} tells us that the $\alttest(d)$ outputs the state $\ket{W_{alt}}$ and $\accept$ with probability 1. Then, since for every $i \in [d] $ the support of the reduced density matrix of $\ket{W_{alt}}$ on register $T_i$ is on $W_x$, Lemma \ref{lem:wittest} tells us that the $\wittest$ outputs $\accept$ with probability at least $2/3$. 

\vspace{0.05in}

\noindent {\bf Part (2):} By Claim~\ref{claim:projection} and the fact that the state $\ket{\phi}$ is orthogonal to $\ket{W_{alt}}$, we can conclude that if the $\alttest(d)$ outputs $\accept$ then the output state is a pure state $\ket{\phi'} \in (\watd )^\bot$. 
Now, by Lemma \ref{lem:wittest}, the probability that the $\wittest(d)$ outputs $\accept$ on input $\ket{\phi'}$ is at most $1/3$. 
\end{proof}

\noindent{\bf Correctness in  case $x \in L_{no}$:} By Lemma \ref{lem:wittest} it follows easily that for all $t \in [q]: \;  A_x^t \in \uqmacp_{no}$. In this case, $\mcO$ outputs $\reject$ in every iteration, and hence $\mcA$ outputs $\reject$. \\

\noindent This concludes the proof of Theorem \ref{main}.

\end{proof}
\suppress{
\subsection*{Open questions}
There are two main open questions concerning this work.
\begin{enumerate}
\item As defined, for $L \in \fqma$, there is a verification procedure such that for $x \in L_{yes}$ there are
no eigenvalues of the operator $\Pi_x$ in the open interval
$(2/3,1/3)$. Can this be assumed without loss of generality for any $L
\in \qma$? That is,
is there some way to ensure that for any $L \in \qma$, there is a
verification procedure such that for all $x \in L$, the corresponding
operator $\Pi_x$ has no eigenvalue in the open interval $(2/3,1/3)$ ?
\item For $L \in \qma$, assume that it has a verification procedure
such that for all $x \in L_{yes}$, there are no eigenvalues of $\Pi_x$ in the
open interval $(2/3,1/3)$, however there could be exponentially many
eigenvalues more than $2/3$. Is there some way to efficiently
(Turing) reduce $L$ to $\uqmacp$, possibly via probabilistic or
quantum procedures ?  
\end{enumerate}
}
\subsection*{Acknowledgments} 
Most of this work was conducted when R.J., I.K., M.S. and S.Z. were at the Centre for Quantum Technologies (CQT) in Singapore, and partially funded by the Singapore Ministry of Education and the National Research Foundation. Research partially supported by European Commission IST projects Qubit Application (QAP) 015848 and Quantum Computer Science (QCS) 25596, by the French ANR programs under contract ANR-08-EMER-012 (QRAC project) and ANR-09-JCJC-0067-01 (CRYQ project), by the French MAEE STIC-Asie program FQIC and by NSF grant DMS-0606795 . R.J., I.K., M.S. and S.Z. would like to thank Hartmut Klauck for several insightful discussions during the process of this work. G.K. would like to thank Scott Aaronson and Dorit Aharonov for discussions.

%Rahul Jain's work was supported by the internal grants of the Centre for Quantum Technologies (CQT). {\bf Please add here Acknowledgment to various grants etc.}

%\bibliography{}
%\bibliographystyle{plain}

\appendix

\COMMENT{
\section{Yet another definition of $\fqma$}
\label{sec:third}

We have seen two definitions for $\fqma$ and have
proven their equivalence. In high level, one says that in the yes
instances there is a polynomial number of eigenvalues of the
projection operator $\Pi_x$ that are larger than $2/3$, while no
eigenvalue is in the interval $(1/3,2/3)$. The other definition says
that in a yes instance there exists a subspace of polynomial dimension
such that every state in the subspace is accepted with probability at
least $2/3$ and every state orthogonal to this subspace is accepted
with probability at most $1/3$.    

A natural question is whether the use of a subspace is necessary in
the second definition or we could have just talked about a set of
orthonormal vectors of polynomial size, where each vector is accepted
with probability $2/3$ and every vector orthogonal to these ones is
accepted with probability at most $1/3$. More precisely, we could have
the following definition.  

\begin{Def}\label{def:fqma3}
% [Few Quantum Merlin-Arthur ($\fqma(c,w,s)$)] \label{def:fqma}
Let $c,w,s : \mathbb{N} \rightarrow [0,1]$ be polynomial time computable functions such that $c(n) > \max\{w(n), s(n)\}$ for all $n \in \mathbb{N}$. A promise problem $L=(L_{yes}, L_{no})$ is in the complexity class 
{\em Vector Few Quantum Merlin-Arthur}
$\vfqma(c,w,s)$ if there exists a verification procedure
$\{V_x : x \in \{0,1\}^*\}$ with polynomials $k$ and $ m$, and a polynomial $q$ such that 
\begin{enumerate}
\item 
for all $x \in L_{yes}$ there exist orthonormal vectors $\ket{\psi_1},\ldots, \ket{\psi_d}$ in $ \bit^{\otimes k}$
with $d \in [q(|x|)]$ where 
\begin{enumerate} 
\item for all pure states $\ket{\psi_i}$ with  $i\in [d],\;\;$
$|| \Pi_{acc} V_x (\ket{\psi_i} \otimes \ket{0^{m}}) ||^2 \geq c(|x|),$
\item for all pure states
$\ket{\phi}$ orthogonal to $\spn(\{ \ket{\psi_i}: i \in [d]\}),\;$
$|| \Pi_{acc} V_x (\ket{\phi} \otimes \ket{0^{m}}) ||^2 \leq w(|x|),$
\end{enumerate}
\item
for all $x \in L_{no}$ and for all pure states
$\ket{\psi} \in \bit^{\otimes k}$,
$|| \Pi_{acc} V_x (\ket{\psi} \otimes \ket{0^{m}}) ||^2 \leq s(|x|).$
\end{enumerate}
\end{Def}

%The question now is whether $\vfqma(2/3,1/3,1/3)$ is equal to $\fqma$. We are not able to prove this equivalence and in fact we don't believe that it is true. Note that while two orthogonal vectors can be accepted with high probability, there could be vectors in their span that are not accepted at all. For example, imagine the projector $\Pi_x$ on the state $\frac{1}{\sqrt{2}}(\ket{0}+\ket{1})$. The orthogonal states $\ket{0}$ and $\ket{1}$ are both accepted with probability 1/2, while the state $\frac{1}{\sqrt{2}}(\ket{0}-\ket{1})$ in their span is accepted with probability 0!

We prove the following equivalence by using Horn's Theorem that relates the eigenvalues of a Hermitian matrix to its diagonal elements.

\begin{Thm}{\em \cite{H54}}
Let $\Lambda=\{\lambda_1 \geq \lambda_2 \geq \ldots \geq \lambda_N | \lambda_i \in \mbR \}$ and $M = \{ \mu_1 \geq \mu_2\geq \ldots \geq \mu_N | \mu_i \in \mbR\}$. 	
Then there exists an $N \times N$ Hermitian matrix with set of eigenvalues $\Lambda$ and set of diagonal elements $M$ if and only if $\sum_{i=1}^t(\lambda_i-\mu_i) \geq 0$ for all $t \in [N]$ and with equality for $t = N$.
\end{Thm}

Let $N=2^{k+m}$. We prove the following equivalence:

\begin{Thm}\label{thm:equivalence2} 
Let $c,w,s : \mathbb{N} \rightarrow [0,1]$ be polynomial time computable functions such that $c(n) > \max\{w(n), s(n)\}$ for all $n \in \mathbb{N}$ and polynomials $k,m,q$ as in Definition \ref{def:altfqma}. Then 
\[
\fqma = \vfqma(1-\frac{1}{3q},\frac{1}{3 N},\frac{1}{3 N}).
\] 
\end{Thm}

\begin{proof}

Let $L \in \fqma$ have a verification procedure $\{V'_x : x \in
\{0,1\}^*\}$ with polynomials $k,m',q$. Let $r$ be a polynomial such
that $2^{-r} \leq \frac{1}{3 N}$ and $2^{-r} \leq \frac{1}{3
q}$. Then, we know from Theorem \ref{thm:amplify} that $L \in \fqma(1
- 2^{-r}, 2^{-r}, 2^{-r})$ with verification procedure $\{ V_x : x \in
\{0,1\}^*$ and polynomials $k,m,q$, where $m$ is a polynomial of $m'$
and $r$. Then the eigenbasis of $\Pi_x$ satisfies the requirements of
the definition of $\vfqma(1-\frac{1}{3q},\frac{1}{3 N},\frac{1}{3
N})$.  This shows that $\fqma$ is contained in
$\vfqma(1-\frac{1}{3q},\frac{1}{3 N},\frac{1}{3 N})$.  On the other
hand let $L \in \vfqma(1-\frac{1}{3q},\frac{1}{3 N},\frac{1}{3
N})$. If $x \in L_{yes}$, then for some $d \in [q]$ there exist an
orthonormal basis $\{ \ket{\psi_1}\otimes \ket{0^m}, \ldots,
\ket{\psi_d}\otimes \ket{0^m}, \ket{u_{d+1}}, \ldots , \ket{u_N}\}$
for $\h^{\otimes N}$ such that for $i \in [d]$, $\mu_i = (\bra{\psi_i}
\otimes \bra{0^m}) \Pi_x (\ket{\psi_i} \otimes
\ket{0^m}) \geq 1-\frac{1}{3q}$ and for $d+1 \leq i \leq N$, $\mu_i
\leq  \frac{1}{3 N}$. 

Consider now the Hermitian matrix that describes the projection operator  $\Pi_x$ in this basis and let its eigenvalues in decreasing order be $\lambda_i, i \in [N]$. The diagonal elements of this matrix are equal to $\mu_i$. Horn's theorem says that for any $t \in [N], \sum_{i=1}^t(\lambda_i-\mu_i) \geq 0$. Then,
\[ \sum_{i=1}^d \lambda_i \geq \sum_{i=1}^d \mu_i \geq d \cdot (1-\frac{1}{3q})
\]
which implies that $\lambda_d \geq -(d-1)+ d \cdot (1-\frac{1}{3q}) \geq 1-\frac{d}{3q} \geq \frac{2}{3}$.

Moreover, we have that 
\[
\lambda_{d+1} \leq \sum_{i=d+1}^N \lambda_i \leq \sum_{i=d+1}^N \mu_i \leq (N-d)\frac{1}{3 N} \leq \frac{1}{3}.
\]
If $x \in L_{no}$ then
\[ \lambda_{1} \leq \sum_{i=1}^N \lambda_i = \sum_{i=1}^N \mu_i \leq N \frac{1}{3 N} \leq \frac{1}{3}. 
\]
This shows that $\vfqma(1-\frac{1}{3q},\frac{1}{3 N},\frac{1}{3 N})$ is contained in $\fqma$.
\end{proof}

\COMMENT{
This Theorem shows that in fact the definition of $\fqma$ with vectors instead of a subspace is not very robust. Note that Horn's theorem is tight in the sense that we can have Hermitian matrices where the relationship between the eigenvalues and the diagonal elements thereof are the worst possible for our case. In other words, the requirements for the parameters $(c,w,d)$ in the class $\vfqma$ in order for it to be equivalent to the class $\fqma$ are tight and of course rather severe. Hence, it seems that the definition of $\fqma$ with the subspace captures better the essence of multiple quantum witnesses.
}
}

\section{Yet another definition of {\sf FewQMA} }
\label{sec:third}

We have seen two definitions for $\fqma$ and have
proven their equivalence. In high level, one says that in the yes
instances there is a polynomial number of eigenvalues of the
projection operator $\Pi_x$ that are larger than $2/3$, while no
eigenvalue is in the interval $(1/3,2/3)$. The other definition says
that in a yes instance there exists a subspace of polynomial dimension
such that every state in the subspace is accepted with probability at
least $2/3$ and every state orthogonal to this subspace is accepted
with probability at most $1/3$. 

A natural question is whether the use of a subspace is necessary in
the second definition or we could have just talked about a set of
orthonormal vectors of polynomial size, where each vector is accepted
with probability $2/3$ and every vector orthogonal to these ones is
accepted with probability at most $1/3$. While we are unable to
show the equivalence of the complexity class defined this way and $\fqma$,
a weak equivalence can indeed be shown. For this we include the parameters of the
verification procedure and the bound on the number of witnesses in the definition of the class,
and we also require strong amplification.
More precisely, consider the following definition.  
  
\begin{Def}\label{def:fqma3}
% [Few Quantum Merlin-Arthur ($\fqma(c,w,s)$)] \label{def:fqma}
Let $c,w,s: \mathbb{N} \rightarrow [0,1]$ be polynomial time
computable functions such that $c(n) > \max\{w(n), s(n)\}$ for all $n
\in \mathbb{N}$. Let $q,k,m$ be polynomials such that $q(n) \leq 2^{k(n)}$ for all $n
\in \mathbb{N}$. A promise problem $L=(L_{yes}, L_{no})$ is in the
complexity class  
{\em Vector Few Quantum Merlin-Arthur}
$\vfqma(c,w,s,q,k,m)$ if there exists a verification procedure
$\{V_x : x \in \{0,1\}^*\}$ with $k$ witness qubits and  $m$ ancilla
qubits, such that
\begin{enumerate}
\item 
for all $x \in L_{yes}$ there exists an orthonormal basis $\{
\ket{\psi_1}, \ldots, \ket{\psi_{2^k}}\}$ of the witness space and $d \in [q(|x|)]$ such
that 
\begin{enumerate} 
\item for all pure states $\ket{\psi_i}$ with  $i\in [d],\;\;$
$|| \Pi_{acc} V_x (\ket{\psi_i} \otimes \ket{0^{m}}) ||^2 \geq c(|x|),$
\item for all pure states $\ket{\psi_i}$ with  $d+1 \leq i \leq 2^k,\;\;$ 
$|| \Pi_{acc} V_x (\ket{\psi_i} \otimes \ket{0^{m}}) ||^2 \leq w(|x|),$
\end{enumerate}
\item
for all $x \in L_{no}$ and for all pure states
$\ket{\psi} \in \bit^{\otimes k}$,
$|| \Pi_{acc} V_x (\ket{\psi} \otimes \ket{0^{m}}) ||^2 \leq s(|x|).$
\end{enumerate}
Finally we define 
$$\vfqma = 
 \bigcup_{ q,k,m: ~ q \leq 2^k}
\hspace{-0.2in}  \vfqma(1- \frac{1}{3q}, \frac{1}{3 \cdot 2^k}, \frac{1}{3}, q, k, m)
$$
\end{Def}

%The question now is whether $\vfqma(2/3,1/3,1/3)$ is equal to $\fqma$. We are not able to prove this equivalence and in fact we don't believe that it is true. Note that while two orthogonal vectors can be accepted with high probability, there could be vectors in their span that are not accepted at all. For example, imagine the projector $\Pi_x$ on the state $\frac{1}{\sqrt{2}}(\ket{0}+\ket{1})$. The orthogonal states $\ket{0}$ and $\ket{1}$ are both accepted with probability 1/2, while the state $\frac{1}{\sqrt{2}}(\ket{0}-\ket{1})$ in their span is accepted with probability 0!

We show $\vfqma = \fqma$ by using Horn's
Theorem that states that for a Hermitian matrix, the vector of the eigenvalues majorizes the diagonal. 
\begin{Thm}{\em \cite{H54}}
Let $R$ be a natural number. Let $\Lambda=\{\lambda_1 \geq \lambda_2
\geq \ldots \geq \lambda_R | \lambda_i \in \mbR \}$ and $A = \{ \mu_1
\geq \mu_2\geq \ldots \geq \mu_R | \mu_i \in \mbR\}$.  Then there
exists an $R \times R$ Hermitian matrix with set of eigenvalues
$\Lambda$ and set of diagonal elements $A$ if and only if
$\sum_{i=1}^t(\lambda_i-\mu_i) \geq 0$ for all $t \in [R]$ and with
equality for $t = R$.
\end{Thm}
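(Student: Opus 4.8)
The plan is to recognize the stated inequalities as the classical majorization relation $\mu \prec \lambda$ and then establish the two directions separately: necessity is Schur's observation about the diagonal of a conjugated matrix, and sufficiency is a constructive argument via transfer operations. First I would record the reformulation: writing $\lambda = (\lambda_1, \ldots, \lambda_R)$ and $\mu = (\mu_1, \ldots, \mu_R)$ in decreasing order, the condition ``$\sum_{i=1}^t(\lambda_i - \mu_i) \ge 0$ for all $t$, with equality at $t = R$'' is exactly the statement that $\mu$ is majorized by $\lambda$. So the theorem asserts that the realizable diagonals of Hermitian matrices with spectrum $\lambda$ are precisely the vectors majorized by $\lambda$.

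For necessity, suppose $M$ is Hermitian with eigenvalues $\lambda_i$ and diagonal entries $\mu_j$. I would diagonalize $M = U \Lambda U^\dagger$ with $\Lambda = \mathrm{diag}(\lambda_1, \ldots, \lambda_R)$ and $U$ unitary, and read off $\mu_j = \sum_{i} |U_{ji}|^2 \lambda_i$. Setting $S_{ji} = |U_{ji}|^2$ gives a doubly stochastic matrix, since the rows and columns of $U$ are orthonormal. For each fixed $t$ I would write $\sum_{j=1}^t \mu_j = \sum_i c_i \lambda_i$ with $c_i = \sum_{j=1}^t S_{ji}$, note that $c_i \in [0,1]$ and $\sum_i c_i = t$, and conclude by an exchange argument that this linear form is maximized by concentrating the weight on the $t$ largest eigenvalues, yielding $\sum_{j=1}^t \mu_j \le \sum_{i=1}^t \lambda_i$. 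Equality at $t = R$ is just preservation of the trace.

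For sufficiency I would use two ingredients. The first is the Hardy--Littlewood--P\'olya fact that $\mu \prec \lambda$ holds if and only if $\mu$ can be obtained from $\lambda$ by a finite sequence of $T$-transforms, each replacing a pair of coordinates $(x_a, x_b)$ by $((1-\theta)x_a + \theta x_b,\, \theta x_a + (1-\theta)x_b)$ for some $\theta \in [0,1]$. The second is the matrix-level realization: for a real symmetric $M$ and any coordinates $a, b$, a real plane rotation in the $(a,b)$-plane realizes any prescribed $T$-transform on the diagonal pair $(M_{aa}, M_{bb})$ while preserving the spectrum. This holds because the rotation fixes the trace of the $2 \times 2$ block, so once the new entry at $a$ is set the entry at $b$ is forced to the complementary value, and the achievable range of the entry at $a$ is the eigenvalue interval of the block, which contains every convex combination of $M_{aa}$ and $M_{bb}$. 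Chaining these, I would start from $M_0 = \Lambda$, whose diagonal is $\lambda$, and apply the rotations realizing the $T$-transforms that carry $\lambda$ to $\mu$; each step preserves the spectrum $\lambda$ and advances the diagonal by one $T$-transform, so the final matrix is real symmetric (hence Hermitian) with spectrum $\lambda$ and diagonal $\mu$.

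The hard part will be the Hardy--Littlewood--P\'olya reduction, namely proving that whenever $\mu \prec \lambda$ and $\mu \ne \lambda$ there is a single $T$-transform bringing $\mu$ strictly closer to $\lambda$ while preserving majorization, so that the process terminates. This is an induction in which one locates a pair of indices where the partial sums of $\mu$ lie strictly below those of $\lambda$ and transfers mass between two coordinates to close the gap; the bookkeeping of the partial-sum inequalities under such a transfer is the only delicate point, after which both the reduction and the overall induction close cleanly.
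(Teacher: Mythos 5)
The paper never proves this statement: it is Horn's theorem, imported verbatim with the citation \cite{H54} and used as a black box in the appendix to establish $\fqma = \vfqma$. So there is nothing in the paper to compare your argument against; it has to stand on its own, and it does. Your proposal is the standard modern proof of the Schur--Horn theorem. The reformulation as majorization ($\mu \prec \lambda$) is exact. The necessity half is Schur's argument: writing $\mu_j = \sum_i |U_{ji}|^2 \lambda_i$ exhibits the diagonal as a doubly stochastic image of the spectrum, and your exchange argument correctly bounds each partial sum, with the trace giving equality at $t=R$. The sufficiency half is also sound: for a real symmetric $M$, conjugation by a rotation in the $(a,b)$-plane fixes the spectrum and all diagonal entries other than $a,b$, preserves $M_{aa}+M_{bb}$, and by continuity the new $(a,a)$ entry sweeps through every value between $M_{aa}$ and $M_{bb}$ as the angle varies --- which is exactly what realizing one $T$-transform requires, independently of whatever off-diagonal entries the intermediate matrices have acquired. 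Compared with Horn's original inductive construction, your route (Hardy--Littlewood--P\'olya reduction plus plane rotations) is more elementary and yields an explicit finite construction terminating in at most $R$ rotations, producing a real symmetric witness, which is slightly stronger than Hermitian.

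One directional slip to repair in your last paragraph. The reduction lemma you need says: if $\mu \prec \lambda$ and $\mu \neq \lambda$, there is a $T$-transform \emph{applied to $\lambda$} whose output still majorizes $\mu$ and agrees with $\mu$ in at least one more coordinate. You wrote that a $T$-transform brings ``$\mu$ strictly closer to $\lambda$,'' which is impossible as literally stated, since $T$-transforms only move a vector downward in the majorization order (for instance $\mu=(1,1)$ cannot be moved toward $\lambda=(2,0)$ by any $T$-transform). Your construction paragraph has the correct orientation (``the $T$-transforms that carry $\lambda$ to $\mu$''), so this is a wording issue rather than a gap: the standard induction --- take the first index $j$ with $\lambda_j > \mu_j$, the first $k > j$ with $\lambda_k < \mu_k$, and transfer $\delta = \min(\lambda_j - \mu_j,\, \mu_k - \lambda_k)$ from coordinate $j$ to coordinate $k$ (a legitimate $T$-transform because $\lambda_j > \mu_j \geq \mu_k > \lambda_k$) --- preserves majorization of $\mu$ and zeroes out at least one discrepancy, so the process halts after at most $R$ steps.
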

\begin{Thm}\label{thm:equivalence2}  
$
\fqma = \vfqma \enspace .
$ 
\end{Thm}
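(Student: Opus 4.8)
The plan is to prove the two inclusions separately, throughout identifying $\fqma$ with $\afqma(2/3,1/3,1/3)$ by Theorem~\ref{thm:equivalence}, so that membership can be phrased in terms of the eigenvalues of the operator $\Pi_x$. The central tool for the harder direction will be Horn's Theorem (stated just above), which lets us pass between the \emph{diagonal} entries of $\Pi_x$ in a chosen basis of witness states (which is the information a $\vfqma$ procedure controls) and its \emph{eigenvalues} (which is the information $\afqma$ requires).

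For the inclusion $\fqma\subseteq\vfqma$ I would start from $L\in\fqma$ and apply the error reduction of Theorem~\ref{thm:amplify} to obtain a verification procedure with parameters $(1-2^{-r},2^{-r},2^{-r})$, where $r$ is a polynomial large enough that $2^{-r}\le\min\{\tfrac{1}{3q},\tfrac{1}{3\cdot 2^k}\}$; amplification leaves $k$ and $q$ unchanged and only enlarges the ancilla count to a polynomial $m$. Since $\Pi_x$ commutes with $\Pi_{init}$, I can pick an orthonormal eigenbasis of $\Pi_x$ inside the $1$-eigenspace $\bit^{\otimes k}\otimes\ket{0^m}$ of $\Pi_{init}$, giving an orthonormal basis $\{\ket{\psi_i}\}_{i\in[2^k]}$ of the witness space with $\norm{\Pi_{acc}V_x(\ket{\psi_i}\otimes\ket{0^m})}^2$ equal to the corresponding eigenvalue. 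By the $\afqma$ description after amplification, exactly $d\in[q]$ of these eigenvalues are $\ge 1-2^{-r}\ge 1-\tfrac{1}{3q}$ while the rest are $\le 2^{-r}\le\tfrac{1}{3\cdot 2^k}$, meeting clauses 1(a) and 1(b) of Definition~\ref{def:fqma3}; soundness ($\le 1/3$) is immediate. After replacing $q$ by $\min\{q,2^k\}$ (legitimate since $\dim W_x\le 2^k$ always), the constraint $q\le 2^k$ holds, so $L\in\vfqma$.

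For the converse $\vfqma\subseteq\fqma$, fix $L\in\vfqma(1-\tfrac{1}{3q},\tfrac{1}{3\cdot 2^k},\tfrac13,q,k,m)$ and, on a yes-instance, let $\{\ket{\psi_i}\}_{i\in[2^k]}$ and $d\in[q]$ be as guaranteed. The acceptance probabilities $\norm{\Pi_{acc}V_x(\ket{\psi_i}\otimes\ket{0^m})}^2 = \bra{\psi_i}\otimes\bra{0^m}\,\Pi_x\,\ket{\psi_i}\otimes\ket{0^m}$ are exactly the diagonal entries of $\Pi_x$ in the basis $\{\ket{\psi_i}\otimes\ket{0^m}\}$, and extending this to a full orthonormal basis of $\bit^{\otimes(k+m)}$ by vectors annihilated by $\Pi_{init}$ contributes only zeros on the diagonal (since $\Pi_x=\Pi_{init}(\cdots)\Pi_{init}$). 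Writing $N=2^{k+m}$, letting $\mu_1\ge\cdots\ge\mu_N$ denote these diagonal entries sorted decreasingly and $\lambda_1\ge\cdots\ge\lambda_N$ the eigenvalues of $\Pi_x$, Horn's Theorem gives $\sum_{i=1}^t\lambda_i\ge\sum_{i=1}^t\mu_i$ for all $t$, with equality at $t=N$. The $d$ large diagonal entries ($\ge 1-\tfrac{1}{3q}$) are the top $d$ after sorting, so $\sum_{i=1}^d\mu_i\ge d(1-\tfrac{1}{3q})$; combined with $\lambda_i\le 1$ this yields $\lambda_d\ge d(1-\tfrac{1}{3q})-(d-1)=1-\tfrac{d}{3q}\ge\tfrac23$. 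From the trace equality together with $\sum_{i=d+1}^N\mu_i\le 2^k\cdot\tfrac{1}{3\cdot 2^k}=\tfrac13$ I get $\lambda_{d+1}\le\sum_{i=d+1}^N\lambda_i\le\sum_{i=d+1}^N\mu_i\le\tfrac13$. Hence exactly $d\in[q]$ eigenvalues are $\ge 2/3$ and none lies in $(1/3,2/3)$, which is condition 1 of Definition~\ref{def:altfqma} for $\afqma(2/3,1/3,1/3)$; on no-instances every eigenvector $\ket{v}\otimes\ket{0^m}$ has eigenvalue at most $s=1/3$ and all other eigenvalues vanish, giving condition 2. Thus $L\in\afqma(2/3,1/3,1/3)=\fqma$.

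The crux I expect is the converse direction, and specifically the gap between diagonal entries and eigenvalues: a $\vfqma$ procedure controls only the acceptance probabilities of the $2^k$ chosen basis states, i.e.\ the diagonal of $\Pi_x$, whereas $\fqma$ demands a genuine spectral gap. Horn's majorization inequality is exactly what bridges this, but it forces the severe parameters baked into the definition of $\vfqma$ — a completeness error as small as $\tfrac{1}{3\cdot 2^k}$ — since the diagonal mass that majorization can redistribute upward scales with the number of small diagonal entries ($\approx 2^k$), and this aggregate must stay below $1/3$ to keep $\lambda_{d+1}\le 1/3$. Checking that these parameters indeed deliver $\lambda_d\ge 2/3$ and $\lambda_{d+1}\le 1/3$, and that they survive the amplification needed in the forward direction, is where the genuine care lies.
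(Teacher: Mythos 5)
Your proposal is correct and follows essentially the same route as the paper's proof: amplification via Theorem~\ref{thm:amplify} for $\fqma\subseteq\vfqma$, and Horn's majorization theorem applied to the diagonal of $\Pi_x$ in the given witness basis (padded by zeros from vectors annihilated by $\Pi_{init}$) to extract the spectral gap $\lambda_d\ge 2/3$, $\lambda_{d+1}\le 1/3$ for the converse. The minor bookkeeping differences (working with $N=2^{k+m}$ versus $N=2^k$, and explicitly replacing $q$ by $\min\{q,2^k\}$) do not change the argument.
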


\begin{proof}
We first show $\fqma \subseteq \vfqma$. Let $L \in \fqma$ have a verification procedure $\{V'_x : x \in
\{0,1\}^*\}$ with polynomials $k,m',q$. Let $r$ be a polynomial such
that $2^{-r} \leq \frac{1}{3 \cdot 2^k}$ and $2^{-r} \leq \frac{1}{3
q}$. We know from Theorem \ref{thm:amplify} that $L \in \fqma(1
- 2^{-r}, 2^{-r}, 2^{-r})$ with verification procedure $\{ V_x : x \in
\{0,1\}^* \}$ and polynomials $k,m,q$, where $m = poly(m',  r)$. 
Then the eigenbasis of $\Pi_x$ satisfies the requirements of
the definition of $\vfqma(1-\frac{1}{3q},\frac{1}{3 \cdot
2^k},\frac{1}{3},q,k,m)$.  This shows that $L \in \vfqma$.  

We now show $\vfqma \subseteq \fqma$. Let $L \in \vfqma(1-\frac{1}{3q},\frac{1}{3 \cdot 2^k},\frac{1}{3},q,k,m)$, for
some polynomials $q,k,m$ such that $q \leq 2^k$. Let $N = 2^k$ and
$\h = \bit^{\otimes k}$. 
If $x \in L_{yes}$, then  there exist an
orthonormal basis $\{ \ket{\psi_1}, \ldots, \ket{\psi_N} \}$
for $\h$ and $d \in [q]$, such that for $i \in [d]$, $ \mu_i \geq 1-\frac{1}{3q}$ and for $d+1 \leq i \leq N$, $\mu_i
\leq  \frac{1}{3 N}$, where by definition
\begin{align*} 
\mu_i & \defeq || \Pi_{acc} V_x (\ket{\psi_i} \otimes \ket{0^{m}}) ||^2  
  = \bra{\psi_i}
\otimes \bra{0^m} \Pi_x \ket{\psi_i} \otimes
\ket{0^m}
\end{align*}

Consider now the Hermitian matrix $M$ that describes the projection
operator  $\Pi_x$ in a  basis that is an extension of $\{ \ket{\psi_1}
\otimes \ket{0^m}, \ldots, \ket{\psi_N} \otimes \ket{0^m} \}$. Note
that $\mu_i, i \in [N]$ are the first $N$ diagonal elements of $M$.  Observe that an eigenvector of $\Pi_x$ with non-zero eigenvalue
is also an eigenvector of $\Pi_{init}$ with non-zero eigenvalue. Since there are $N$
non-zero eigenvalues of $\Pi_{init}$, there are at most $N$
non-zero eigenvalues of $\Pi_x$. This also implies that $\mu_i =0$ for
$N < i \leq 2^{k+m}$. Let the first $N$ eigenvalues of $\Pi_x$ in decreasing
order be $\lambda_i, i \in [N]$.  Then, using Horn's theorem,
\[ \sum_{i=1}^d \lambda_i \geq \sum_{i=1}^d \mu_i \geq d \cdot (1-\frac{1}{3q})
\] 
which implies (since $\lambda_i \leq 1$, for all
$i \in [N]$) that 
$$\lambda_d \geq -(d-1)+ d \cdot (1-\frac{1}{3q}) \geq 1-\frac{d}{3q}
\geq \frac{2}{3} \enspace .$$
Also, we have that 
\begin{eqnarray*}
\lambda_{d+1} & \leq & \sum_{i=d+1}^{2^{k+m}} \lambda_i \quad \leq \quad
\sum_{i=d+1}^{2^{k+m}} \mu_i \quad = \quad \sum_{i=d+1}^{2^{k}} \mu_i 
\quad \leq \quad (N-d)\frac{1}{3 N} \quad \leq \quad \frac{1}{3} \enspace .
\end{eqnarray*}
If $x \in L_{no}$ then by the soundness condition
$ \lambda_{1} \leq \frac{1}{3}.$ This shows that $L \in \fqma$.
\end{proof}

\COMMENT{
This Theorem shows that in fact the definition of $\fqma$ with vectors instead of a subspace is not very robust. Note that Horn's theorem is tight in the sense that we can have Hermitian matrices where the relationship between the eigenvalues and the diagonal elements thereof are the worst possible for our case. In other words, the requirements for the parameters $(c,w,d)$ in the class $\vfqma$ in order for it to be equivalent to the class $\fqma$ are tight and of course rather severe. Hence, it seems that the definition of $\fqma$ with the subspace captures better the essence of multiple quantum witnesses.
}

\end{document}